\let\oldenumerate\enumerate
\renewcommand{\enumerate}{
   \oldenumerate
   \setlength{\itemsep}{1pt}
   \setlength{\parskip}{0pt}
   \setlength{\parsep}{0pt}
}
\long\def\@makecaption#1#2{
  \vskip\abovecaptionskip
  \iftdir\sbox\@tempboxa{#1\hskip1zw#2}
    \else\sbox\@tempboxa{#1.~ #2}
  \fi
  \ifdim \wd\@tempboxa >\hsize
    \iftdir #1\hskip1zw#2\relax\par
      \else #1.~ #2\relax\par\fi
  \else
    \global \@minipagefalse
    \hbox to\hsize{\hfil\box\@tempboxa\hfil}
  \fi
  \vskip\belowcaptionskip}
\newtheorem{definition}{Definition}
\newtheorem{proposition}{Proposition}
\newtheorem{lemma}{Lemma}
\newtheorem{corollary}{Corollary}
\newcommand{\qed}{\hfill$\blacksquare$\par}
\newenvironment{proof}[1][]
{\ifx #1%
\textbf{Proof. }%
\else%
\textbf{Proof#1. }\fi}
{\qed}
\newcommand{\asconv}{\stackrel{\mathrm{\footnotesize a.s.}}{\longrightarrow}}
\newcommand{\E}{\mathrm{E}}
\newcommand{\md}{\mathrm{d}}
\title{\Large \textbf{Evolutionary model of a population of DNA sequences\\through the interaction with an environment and\\its application to speciation analysis}}
\author{
Hitoshi Koyano${}^{1 \ast}$ and Kouji Yano${}^{2}$\\
\normalsize{${}^{1}$Laboratory for Physical Biology, Quantitative Biology Center,}\\
\normalsize{Institute for Physical and Chemical Research (RIKEN),}\\
\normalsize{2-2-3 Minatojima-Minamimachi, Chuo-ku, Kobe, Hyogo 650-0047, Japan}\\
\normalsize{${}^{2}$Department of Mathematics and Mathematical Analysis,}\\
\normalsize{Graduate School of Science, Kyoto University,}\\
\normalsize{Kitashirakawa-Oiwakecho, Sakyo-ku, Kyoto 606-8502, Japan}\\
\normalsize{$^\ast$Corresponding author. E-mail: hitoshi.koyano@riken.jp}
}
\date{}
\begin{document}

\maketitle

\begin{abstract}
In this study, we construct an evolutionary model of a population of
DNA sequences interacting with the surrounding environment on the
topological monoid $A^{\ast}$ of strings on the alphabet $A = \{
\mathtt{a}, \mathtt{c}, \mathtt{g}, \mathtt{t} \}$.
A partial differential equation governing the evolution of the DNA
population is derived as a kind of diffusion equation on $A^{\ast}$.
Analyzing the constructed model in a theoretical manner, we present
conditions for sympatric speciation, the possibility of which
continues to be discussed.
It is shown that under other same conditions one condition determines
whether sympatric speciation occurs or the DNA population continues to
move around randomly in a subset of $A^{\ast}$.
We next demonstrate that the population maintains a kind of equilibrium
state under certain conditions.
In this situation, the population remains nearly unchanged and does not
differentiate even if it can differentiate into others.
Furthermore, we calculate the probability of sympatric speciation and
the time expected to elapse before it.
\end{abstract}

\noindent {\bf Keywords:} Topological monoid of strings, probability
theory, population of DNA sequences, evolution, speciation.

\section{Introduction}
\label{section:Introduction}

Evolution of a biological population is, at the most fundamental
level, temporal change in a set of DNA sequences that the population
has.
Individuals, therefore DNA sequences that they have, living in an
environment are under selection pressure from the environment and
leave offspring the number of which is determined according to the
pressure operating on them.
Offspring's DNA sequences can contain mutations that occur randomly.
Consequently, a population of DNA sequences gradually changes as
generations pass.

Let us attempt to formulate the sentences in the previous paragraph as
a mathematical model as naturally and faithfully to them as possible.
We denote the set of strings on an alphabet $A = \{ \mathtt{a},
\mathtt{c}, \mathtt{g}, \mathtt{t} \}$ (i.e., finite sequences of
elements of $A$) by $A^{\ast}$.
Any DNA sequence is represented as an element of $A^{\ast}$.
Let $A^{\ast}$ form a metric space, provided with the Levenshtein
distance (denoted by $d_{L}$ hereafter), although various distance
functions such as the extended Hamming distance, longest common
subsequence distance, and Damerau--Levenshtein distance are defined on
$A^{\ast}$, depending on the problem to be considered.
The Levenshtein distance between two strings is the minimum number of
three types of edit operation, insertion, deletion, and substitution,
necessary to transform one into another and is frequently used to
evaluate evolutionary change in DNAs and gene sequences.
In this setting, temporal change in a set of DNA sequences that a
biological population has can be captured as change in a subset of
$A^{\ast}$ with time.
Each DNA sequence in an environment is under selection pressure
exerted by the environment and the pressure is determined according to
the order of nucleotides that compose the sequence, in other words,
its position in $A^{\ast}$.
Therefore, the selection pressure is a nonnegative real-valued
function on $A^{\ast}$ defined when an environment is given.
The selection pressure can be considered a counterpart in evolutionary
biology of a force field in physics, although it is not a
vector-valued function on a vector space.
Furthermore, mutations randomly contained in offspring sequences are
captured as a random transformation on $A^{\ast}$ that does not always
return the same element of $A^{\ast}$ for each element of $A^{\ast}$.
In the following sections, advancing this viewpoint, we develop an
evolutionary model of a biological population as the dynamics of a
subset of $A^{\ast}$ by combining functions and operators on
$A^{\ast}$.

Various evolutionary models of biological populations have been
proposed.
See, for example,
\cite{Templeton_1980,Higgs_1991,Manzo_1994,Geritz_1997,Geritz_1998,Doebeli_2000,Frossel_2000}
for articles that put emphasis on the mechanisms of speciation.
However, any model was constructed on the set $\mathbb{R}$ of real
numbers or real vector space $\mathbb{R}^{n}$.
No studies have been performed that formulated the evolution of a
biological population as a mathematical model at the fundamental level
of temporal change in a set of DNA sequences that the population has,
theoretically analyzed the model, and performed numerical experiments
based on the model.
The reason for this would be that no tools have been prepared for
model construction and analysis because $A^{\ast}$ has not been a
space on which probability theory and statistics should be
constructed, although the phenomenon that a population of DNA
sequences evolves need to be treated in a probabilistic manner because
it is necessary to incorporate mutations that occur randomly in a
model.

In various dynamical systems in which lots of particles interact in a
numerical vector space, particles that is analysis objects are
represented as their coordinates, i.e., numerical vectors.
By contrast, an evolutionary process can be seen as temporal change in
a subset of $A^{\ast}$ because our objects, DNA sequences, are
represented as elements of $A^{\ast}$.
In this study, we model the phenomenon described in the first
paragraph, adopting the course described in the second paragraph based
on probability theory on $A^{\ast}$ developed in recent years
\cite{Koyano_2010,Koyano_2014,Koyano_2016,Koyano_submitted}.
We subsequently investigate the process of speciation (see, for
example, \cite{Coyne_2004,Nosil_2012,Dieckmann_2004} for a review)
that is one of the most fundamental phenomena in biology based on the
model.
Here we construct an evolutionary model for asexual
populations, but it is not difficult to extend the model
into one for sexual populations.

\section{Formulation of the model}
\label{sec:Formulation_of_the_model}

\subsection{Evolutionary model of a DNA population}
\label{subsec:Evolutionary_model_of_a_DNA_population}

In this subsection, we formulate a model of the evolution of a
population of DNA sequences interacting with a surrounding environment
on the topological monoid $A^{\ast}$.
We denote a population of DNA sequences at time $t \in [0, \infty)$ by
$S(t)$.
If $s \in S(t)$, we have $s \in A^{\ast}$.
Let $|X|$ represent the number of elements of $X$ if $X$ is a set and
$|s|$ represent the length of $s$ (i.e., the number of elements of $A$
that composes $s$) if $s \in A^{\ast}$.
We set $n(t) = |S(t)|$ and write $S(t) = \{ s_{1}(t), \cdots,
s_{n(t)}(t) \}$.
Let $R(t)$ be a complete system of representatives of the quotient set
$S(t)/\!=$ of $S(t)$ with respect to the equality relation $=$.
We denote the number of elements of $S(t)$ that are equal to $s
\in R(t)$ (i.e., the number of elements of the equivalent class
of $s \in R(t)$) by $v(s, t)$ and set $q(s, t) = v(s, t) / n(t)$.
We define $Q(t) = \{ q(s, t): s \in R(t) \}$ and refer to $Q(t)$ as
the relative frequency distribution of $S(t)$.
In the following, we suppose that the population size $n(t)$ is
exogenously given.

Let $\mathcal{E}$ represent the set of possible environments in
geographical regions of the earth.
We denote the selection pressure exerted by an environment $E \in
\mathcal{E}$ on a DNA sequence $s \in A^{\ast}$ in the case where $s$
exists in $E$ by $p(s, E)$.
Therefore, the selection pressure is a function $p( \ \cdot \ ,
E) : A^{\ast} \to [0,\infty)$ defined when $E \in \mathcal{E}$ is given.
As formulated below, as $p(s, E)$ is larger, the fitness of $s$ to $E$
is lower, and consequently, the number of offspring of $s$ is smaller,
and vice versa.
We suppose that there exists a level of selection pressure under which
any $s \in A^{\ast}$ cannot produce offspring in $E$ and denote its
critical value by $p_{E}$.
In Subsection \ref{subsec:Formulation_of_the_model_continuation}, we
introduce a model of $p(s, E)$.

We suppose that a ratio $\gamma$ of sequences in $S(t)$ produce
offspring and die in a unit time $[t, t + \Delta t]$ for a constant
$\gamma \in (0, 1]$.
In this setting, the life span varies depending on the sequence if
$\gamma < 1$, whereas the generation overlapping between a parent and
children is not considered.
We denote the set of sequences in $S(t)$ that produce offspring and
die in $[t, t + \Delta t]$ by $\hat{S}(t)$ and
suppose that $\hat{S}(t)$ is determined according to the following
rule C$_{1}$:
For a minimum integer $\hat{n}(t)$ satisfying $\hat{n}(t) \geq \gamma
n(t)$,
$\hat{S}(t)$ is a set of the first $\hat{n}(t)$ sequences when
sequences in $S(t)$ are sorted in descending order with respect to
the living time, where the order of sequences that are equal in
living time is arbitrary.
Let $g_{t}$ represent a mapping that returns $\hat{S}(t) \subset S(t)$
determined according to the rule C$_{1}$ given $S(t)$, in other words
$\hat{S}(t) = g_{t}(S(t))$.

Let $\hat{R}(t)$ be a complete system of representatives of the
quotient set $\hat{S}(t)/\!=$ of $\hat{S}(t)$ with respect to the
equality relation $=$.
We denote the number of elements of $\hat{S}(t)$ that are equal to $s
\in \hat{R}(t)$ by $\hat{v}(s, t)$.
We define the function $f( \ \cdot \ , t, E): \hat{R}(t) \to [0, 1]$
as
\[
f(s, t, E) = \left\{
\begin{array}{ll}
\displaystyle{\frac{p_{E} - p(s, E)}{\displaystyle{\sum_{s^{\prime} \in
\hat{R}(t)}} (p_{E} - p(s^{\prime}, E))}} & \mbox{if } p(s, E) < p_{E},\\
0 & \mbox{otherwise}.
\end{array}
\right.
\]
We have $0 \leq f(s, t, E) \leq 1$ for any $s \in \hat{R}(t)$ and
$\sum_{s \in \hat{R}(t)} f(s, t, E) = 1$.

We suppose that the total number $o(s, t)$ of offspring of sequences
in $\hat{S}(t)$ that are equal to $s$ for each $s \in \hat{R}(t)$ is
determined according to the following rule C$_{2}$:
Letting $\lfloor x \rfloor$ represent the integer part of $x \in
\mathbb{R}$, we set
\[
\tilde{n}(t) = n(t + \Delta t) - n(t) + \hat{n}(t), \quad
u(t) = \tilde{n}(t) - \sum_{s \in \hat{R}(t)} \lfloor \tilde{n}(t)
f(s, t, E) \rfloor.
\]
$\tilde{n}(t)$ represents the number of offspring that $\hat{S}(t)$
can produce.
We sort sequences in $\hat{R}(t)$ in descending order with respect to
$\tilde{n}(t) f(s, t, E)$.
The order of sequences that have an equal value of $\tilde{n}(t) f(s,
t, E)$ is arbitrary.
Then, $o(s, t) = \lfloor \tilde{n}(t) f(s, t, E) \rfloor + 1$ for the
first to $u(t)$th sequence $s$ in $\hat{R}(t)$ and $o(s, t) = \lfloor
\tilde{n}(t) f(s, t, E) \rfloor$ for the $(u(t) + 1)$th to last
sequence $s$ in $\hat{R}(t)$.
This is a rule of determining the number of offspring that
satisfies $\tilde{n}(t) = \sum_{s \in \hat{R}(t)} o(s , t)$ and is as
faithful as to $p(s, E)$ possible.
We set $o(s, t) = 0$ for $s \notin \hat{R}(t)$.
We denote the set of offspring sequences of $\hat{S}(t)$ by
$\tilde{S}(t)$.
$f(s, t, E)/\hat{v}(s, t)$ and $o(s , t)/\hat{v}(s, t)$ can be
regarded as the relative fitness and the fitness of one of sequences
in $\hat{S}(t)$ that are equal to $s \in \hat{R}(t)$, respectively.

Let $m$ be a random transformation on $A^{\ast}$ that does not
necessarily maps each element of $A^{\ast}$ to the same element of
$A^{\ast}$.
We denote the probability that $m$ outputs $s^{\prime}$ given $s$ as
an input by $\mu(s, s^{\prime})$ for $s, s^{\prime} \in A^{\ast}$.
We refer to $m$ as a mutation operator if $m$ satisfies the following
three conditions for $s, s^{\prime}, s^{\prime \prime} \in A^{\ast}$:
(i) $d_{L}(s, s^{\prime}) < d_{L}(s, s^{\prime \prime})$ implies
$\mu(s, s^{\prime}) > \mu(s, s^{\prime \prime})$, (ii) $d_{L}(s,
s^{\prime}) = d_{L}(s, s^{\prime \prime})$ implies $\mu(s, s^{\prime})
= \mu(s, s^{\prime \prime})$, and (iii) mutated sequences generated by
$m$ are independent.
In Subsection~\ref{subsec:Formulation_of_the_model_continuation},
we construct $m$ that satisfies these conditions for $d_{L}$ in a
concrete manner, using the operation of concatenation defined on
$A^{\ast}$, which makes $A^{\ast}$ form a monoid.
We define the mutation operator at time $t$ as a set $m_{t} =
(m_{t}^{(1)}, \cdots, m_{t}^{(\tilde{n}(t))})$ of $\tilde{n}(t)$
mutation operators that operates a set $\{ s_{1}, \cdots,
s_{\tilde{n}(t)} \}$ of $\tilde{n}(t)$ strings in the following
component-wise manner:
\[
m_{t}(s_{1}, \cdots, s_{\tilde{n}(t)}) = (m_{t}^{(1)}(s_{1}), \cdots,
m_{t}^{(\tilde{n}(t))}(s_{\tilde{n}(t)})).
\]
We denote the probability that an offspring sequence contains a
mutation corresponding to Levenshtein distance one by $\pi \in (0,
1)$.

For any $n \in \mathbb{N}$ ($\mathbb{N}$ represents the set of natural
numbers including zero) and $s \in A^{\ast}$, we set the symbol
$\bigcup^{n} \{ s \}$ ($\bigcup$ has only a superscript) as
$\bigcup^{n} \{ s \} = \emptyset$ for $n = 0$ and $\bigcup^{n} \{ s \}
=$ a set of $n$ $s$s for $n \geq 1$.
We define the replication operator $r_{t} = (r_{t}^{(1)}, \cdots,
r_{t}^{(\hat{n}(t))}): (A^{\ast})^{\hat{n}(t)} \to
(A^{\ast})^{\tilde{n}(t)}$ at time $t$ as
\[
r_{t}(s_{1}, \cdots, s_{\hat{n}(t)}) = ( r_{t}^{(1)}(s_{1}), \cdots,
r_{t}^{(\hat{n}(t))}(s_{\hat{n}(t)}) ) = \bigcup_{i = 1}^{\hat{n}(t)}
\bigcup^{o(s_{i}, t)} \{ s_{i} \}.
\]

Lastly, we define the generating operator $G_{t}$ at time $t$ as
$G_{t}(S(t)) = S(t) \smallsetminus g_{t}(S(t)) \cup m_{t} \circ r_{t}
\circ g_{t}(S(t)) $ ($\circ$ represents composition).
A population $S(t + \Delta t)$ is generated from a population $S(t)$
in the following manner:
\begin{eqnarray*}
m_{t} \circ r_{t} \circ g_{t}(S(t))
&=& m_{t} \circ r_{t}(\hat{S}(t))
 \ = \ m_{t} \bigl( r_{t}^{(1)} (\hat{s}_{1}(t)), \cdots, r_{t}^{(\hat{n}(t))}
(\hat{s}_{\hat{n}(t)}(t)) \bigr)\\
&=& m_{t} \left( \bigcup_{i = 1}^{\hat{n}(t)} \bigcup^{o(\hat{s}_{i}(t), t)}
\{ \hat{s}_{i}(t) \} \right)
 \ = \ \bigl\{ m_{t}^{(1)}(s_{1}^{\prime}), \cdots, m_{t}^{(\tilde{n}(t))}
(s_{\tilde{n}(t)}^{\prime}) \bigr\}\\
 &=& \{ \tilde{s}_{1}(t), \cdots, \tilde{s}_{\tilde{n}(t)}(t) \}
 \ = \ \tilde{S}(t),\\
G_{t}(S(t))
&=& S(t) \smallsetminus g_{t}(S(t)) \cup m_{t} \circ r_{t} \circ
g_{t}(S(t))
 \ = S(t) \smallsetminus \hat{S}(t) \cup \tilde{S}(t)
 \ = \ S(t + \Delta t),
\end{eqnarray*}
where $\bigcup_{i = 1}^{\hat{n}(t)} \bigcup^{o(\hat{s}_{i}(t), t)} \{
\hat{s}_{i}(t) \} = \{ s_{1}^{\prime}, \cdots,
s_{\tilde{n}(t)}^{\prime} \}$ and $m_{t}^{(i)}(s_{i}^{\prime}) =
\tilde{s}_{i}(t)$ for each $i = 1, \cdots, \tilde{n}(t)$.

\subsection{Equation governing the dynamics of $q(s, t)$}
\label{Equation_governing_the_dynamics_of_q(s,_t)}

In this subsection, we derive a partial differential equation that
describes the time evolution of the relative frequency distribution
$q(s, t)$ of $S(t)$ formulated above, letting $\Delta t \to 0$.
We suppose that there exists an upper limit to the length of a
sequence inserted between two nucleotides (two letters of $A$) in a
DNA sequence by stochastic mutation and denote it by $c$.
We set $\ell(s) = c(|s| + 1) + |s|$ for each $s \in A^{\ast}$ and
$V(s, d) = \{ s^{\prime} \in A^{\ast}: d_{L}(s, s^{\prime}) = d \}$
for $d \in \mathbb{N}$.
We put $W(s) = \{ s^{\prime} \in A^{\ast}: d_{L}(s^{\prime}, s) \leq
\ell(s^{\prime}) \}$ and $W(s, d) = \{ s^{\prime} \in W(s):
d_{L}(s^{\prime}, s) = d \}$.
$W(s)$ is the set of sequences that can produce offspring sequences
that are equal to $s$ by mutation, and we have $W(s) = \cup_{0 \leq d <
\infty} W(s, d)$.
We set $ x(s, t) = |\{ s^{\prime} \in S(t): s^{\prime} = s \}|$,
$\hat{x}(s, t) = |\{ s^{\prime} \in \hat{S}(t): s^{\prime} = s \}|$,
and $y(s, t) = \hat{x}(s, t)/x(s, t)$ for $t \in [0, \infty)$.
${}_{n} C_{r}$ represents the number of combinations of $n$ items
taken $r$ at a time.

\begin{proposition}[Evolutionary equation of a DNA population]
\label{prop:Evolutionary_equation_of_a_DNA_population}
If there exists the limit $b(s, t)$ of $y(s, t)$ letting $y(s, t),
\Delta t \to 0$ with the ratio $y(s, t)/\Delta t$ constant,
the time evolution of the relative frequency distribution $q(s, t)$ of
$S(t)$ is described by the partial differential equation
\begin{eqnarray}
\frac{\partial q(s, t)}{\partial t} &=&
o(s, t) b(s, t) q(s, t) (1 - \pi)^{\ell(s)}
\nonumber \\
& & {} + \sum_{1 \leq d < \infty} \sum_{s^{\prime} \in W(s, d)}
o(s^{\prime}, t) b(s^{\prime}, t) q(s^{\prime}, t)
\frac{{}_{\ell(s^{\prime})} C_{d} \pi^{d} (1 - \pi)^{\ell(s) -
d}}{|V(s^{\prime}, d)|}
- b(s, t) q(s, t)
\label{eq:evolutionary_equation}
\end{eqnarray}
for any $s \in A^{\ast}$ and $t \in [0, \infty)$.
\end{proposition}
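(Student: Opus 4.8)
The plan is to track, over one time step $[t, t+\Delta t]$, the expected change in the count $x(s, t+\Delta t)$ of sequences equal to a fixed $s \in A^{\ast}$, then divide by $n(t)$, let $\Delta t \to 0$, and identify the limit with $\partial q(s,t)/\partial t$. First I would decompose $x(s, t+\Delta t)$ into contributions from the survivors $S(t) \smallsetminus \hat{S}(t)$ and from the offspring set $\tilde{S}(t)$. The survivors contribute $x(s,t) - \hat{x}(s,t)$ copies of $s$. For the offspring, I would condition on the pre-mutation multiset $r_t \circ g_t(S(t))$, which contains exactly $o(s', t)$ copies of each $s' \in \hat{R}(t)$: a given pre-mutation copy $s'$ produces an offspring equal to $s$ precisely when the mutation operator maps $s' \mapsto s$, an event of probability $\mu(s', s)$, and these events are independent across copies by condition (iii) on $m$. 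Hence the expected number of offspring equal to $s$ is $\sum_{s' \in A^{\ast}} o(s', t)\,\mu(s', s)$; since $o(s',t)=0$ for $s' \notin \hat{R}(t)$, and since $\mu(s',s)=0$ unless $s' \in W(s)$ (only sequences within $\ell(s')$ edits of $s$ can mutate to $s$, by the bound $c$ on insertion lengths), this sum ranges over $W(s) \cap \hat{R}(t)$.

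The second step is to compute $\mu(s', s)$ explicitly from the concrete construction of $m$ promised in Subsection~\ref{subsec:Formulation_of_the_model_continuation}. The probability that an offspring carries a mutation of Levenshtein distance exactly $d$ should factor as a binomial-type term ${}_{\ell(s')} C_{d}\,\pi^{d}(1-\pi)^{\ell(s')-d}$ in the number of elementary edit sites (of which there are $\ell(s') = c(|s'|+1)+|s'|$), and conditions (i)--(ii) on $m$ force the conditional distribution over the sequences at distance $d$ to be uniform on $V(s', d)$, giving $\mu(s', s) = {}_{\ell(s')} C_{d}\,\pi^{d}(1-\pi)^{\ell(s')-d}/|V(s',d)|$ when $d = d_L(s',s)$. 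For $d=0$ this is just $(1-\pi)^{\ell(s)}$ (the $s' = s$ term, since $\ell(s')=\ell(s)$ there). Substituting and splitting off the $d = 0$ term yields
\[
\E[x(s, t+\Delta t)] - x(s,t) = - \hat{x}(s,t) + o(s,t)\hat{x}(s,t)\frac{(1-\pi)^{\ell(s)}}{\hat v(s,t)\vee 1}\cdot(\cdots)
\]
— more precisely, writing each $o(s',t)$ over the relevant pre-mutation copies and recalling $\hat x(s',t) = y(s',t) x(s',t)$, the offspring count equal to $s$ becomes $\sum_{0 \le d < \infty}\sum_{s' \in W(s,d)} o(s',t)\, y(s',t)\, x(s',t)\,\mu(s',s)$ after absorbing the multiplicity $\hat v(s',t)$ of $s'$ in $\hat R(t)$ appropriately into $o$ and $x$. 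Dividing through by $n(t)$ replaces $x$ by $q$ and $\hat x/n(t)$ by $y(s,t)q(s,t)$.

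The final step is the limiting argument. Dividing by $\Delta t$, the survivor loss term $-\hat x(s,t)/n(t) = -y(s,t)q(s,t)$ divided by $\Delta t$ tends to $-b(s,t)q(s,t)$ by the hypothesis that $y(s,t)/\Delta t \to b(s,t)$; the same substitution turns every $y(s',t)/\Delta t$ in the offspring sum into $b(s',t)$, and the left side $(\E[x(s,t+\Delta t)] - x(s,t))/(n(t)\Delta t)$ is identified with $\partial q(s,t)/\partial t$ in the continuum limit. Peeling off the $d=0$ summand gives the first term $o(s,t)b(s,t)q(s,t)(1-\pi)^{\ell(s)}$, the $d \ge 1$ part gives the double sum with the stated binomial weights, and the survivor term gives $-b(s,t)q(s,t)$, which is exactly \eqref{eq:evolutionary_equation}. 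I expect the main obstacle to be bookkeeping rather than analysis: carefully reconciling the representative-system notation ($\hat R(t)$, $\hat v(s,t)$, $o(s,t)$ defined per equivalence class) with the per-sequence counts ($x$, $\hat x$, $y$) so that no multiplicity factor is double-counted or dropped, and justifying the interchange of the expectation with the infinite sum over $d$ (which is harmless since $W(s) = \bigcup_d W(s,d)$ and the binomial weights are summable). The passage to the PDE is formal in the spirit of deriving diffusion equations, so the rigor burden sits entirely in the one-step expectation identity.
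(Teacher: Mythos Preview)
Your approach is essentially the same as the paper's: decompose the new population into survivors and offspring, compute the mutation probability as the binomial weight ${}_{\ell(s')}C_d\,\pi^d(1-\pi)^{\ell(s')-d}$ spread uniformly over $V(s',d)$, assemble the balance identity, divide by $\Delta t$, and pass to the limit using the hypothesis $y/\Delta t \to b$. The only cosmetic differences are that the paper works directly with the relative frequency $q(s,t+\Delta t)$ (invoking the strong law of large numbers where you take an expectation) and inserts a first-order Taylor expansion $q(s,t+\Delta t)=q(s,t)+\Delta t\,\partial q/\partial t$ before rearranging, whereas you form the difference quotient directly; neither change affects the substance.
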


\begin{proof}
In the model of the time evolution of $S(t)$ formulated above, the
relative frequency $q(s, t + \Delta t)$ of any $s \in A^{\ast}$ at
time $t + \Delta t$ consist of the following three components.
(i) the relative frequency of sequences equal to $s$ and belonging to
$S(t) \smallsetminus \hat{S}(t)$,
(ii) the relative frequency of sequences equal to $s$ and belonging to
$\hat{S}(t)$,
and (iii) the relative frequency of sequences equal to $s$ that are
produced by sequences different from $s$ and belonging to
$\hat{S}(t)$.
An offspring sequence of any $s \in S(t)$ is obtained by performing
the edit operation corresponding to Levenshtein distance one on $s$ at
most $\ell(s)$ times,
and the probabilities that the edit operation is performed and that it
is not performed are equal to $\pi$ and $1 - \pi$, respectively (see
the definitions of $c$ and $\ell(s)$ above and the definition of
mutation probability $\pi$ in the sixth paragraph of Section
\ref{sec:Formulation_of_the_model}).
In addition, there exist $|V(s, d)|$ sequences to which the
Levenshtein distance from $s$ is equal to $d$.
Therefore, the probability that an offspring sequence of $s$ is a
sequence to which the Levenshtein distance from $s$ is equal to $d$
satisfying $0 \leq d \leq \ell(s)$ is equal to ${}_{\ell(s)}C_{d}
\pi^{d} (1 - \pi)^{\ell(s) - d}/|V(s, d)|$.
Furthermore, the relative frequency in $S(t)$ of sequences that are
equal to $s$ and produce offspring and die in the unit time $[t, t +
\Delta t]$ is $y(s, t) q(s, t)$.
Thus, using the strong law of large numbers, the above (iii) is
represented as
\[
\sum_{1 \leq d < \infty} \sum_{s^{\prime} \in W(s, d)} o(s^{\prime},
t) y(s^{\prime}, t) q(s^{\prime}, t) \frac{{}_{\ell(s^{\prime})} C_{d}
\pi^{d} (1 - \pi)^{\ell(s^{\prime}) - d}}{|V(s^{\prime}, d)|}.
\]
(i) and (ii) are easy to represent and we obtain
\begin{eqnarray}
q(s, t + \Delta t) &=& \frac{x(s, t) - \hat{x}(s, t)}{x(s, t)} q(s, t)
+ o(s, t) y(s, t) q(s, t) (1 - \pi)^{\ell(s)} \nonumber \\
& & {} + \sum_{1 \leq d < \infty} \sum_{s^{\prime} \in W(s,
d)} o(s^{\prime}, t) y(s^{\prime}, t) q(s^{\prime}, t) \frac{{}_{\ell(s^{\prime})}
C_{d} \pi^{d} (1 - \pi)^{\ell(s^{\prime}) - d}}{|V(s^{\prime}, d)|}.
\label{eq:balance_equation}
\end{eqnarray}
Expanding the left hand side of Equation (\ref{eq:balance_equation})
as $q(s, t + \Delta t) = q(s, t) + \Delta t \partial q(s, t)/\partial
t$ in a Taylor series with respect to $t$ and rearranging the
equation provides
\begin{eqnarray*}
\frac{\partial q(s, t)}{\partial t} &=&
\frac{1}{\Delta t} o(s, t) y(s, t) q(s, t) (1 - \pi)^{\ell(s)}\\
& & {} + \frac{1}{\Delta t} \sum_{1 \leq d < \infty} \sum_{s^{\prime} \in W(s, d)}
o(s^{\prime}, t) y(s^{\prime}, t) q(s^{\prime}, t) \frac{{}_{\ell(s^{\prime})}
C_{d} \pi^{d} (1 - \pi)^{\ell(s) - d}}{|V(s^{\prime}, d)|}
- \frac{1}{\Delta t} y(s, t) q(s, t).
\end{eqnarray*}
Letting $y(s, t), \Delta t \to 0$ and $y(s^{\prime}, t), \Delta t \to
0$ with the ratios $y(s, t)/\Delta t$ and $y(s^{\prime}, t)/\Delta t$
constant, we obtain Equation (\ref{eq:evolutionary_equation}) by the
assumption of the proposition.
\end{proof}

$q(s^{\prime}, t)$ in the third term of the right hand side of
Equation (\ref{eq:balance_equation}) cannot be expanded in a Taylor
series with respect to $s^{\prime}$ because $s$ is a discrete
variable, and consequently, the resultant equation, Equation
(\ref{eq:evolutionary_equation}), does not include partial
derivatives with respect to $s^{\prime}$ in the right hand side,
unlike the diffusion equation.



\subsection{Models of the selection pressure and mutation}
\label{subsec:Formulation_of_the_model_continuation}

In this subsection, we formulate models of the selection pressure
$p(s, E)$ and stochastic mutation $m$ to complete the model
construction, which will be necessary in performing numerical
experiments based on the model constructed in Subsection
\ref{subsec:Evolutionary_model_of_a_DNA_population}.

We suppose that the critical value of selection pressure is $p_{E} =
1$ for a given environment $E \in \mathcal{E}$ without the loss of
generality.
Let $D_{j} \subset A^{\ast}$ for each $j = 1, \cdots, k$ and $D_{j}
\cap D_{j^{\prime}} = \emptyset$ if $j \neq j^{\prime}$.
For $\lambda_{j} \in A^{\ast}$ and $\rho_{j} \in (0, \infty)$, we
define the function $\phi( \ \cdot \ , \lambda_{j}, \rho_{j}) : A^{\ast}
\to [0, 1]$ as
\[
\phi(s, \lambda_{j}, \rho_{j}) = \frac{1}{(\rho_{j} + 1) \bigl|
V(\lambda_{j}, d_{L}(s, \lambda_{j})) \bigr|}
\left( \frac{\rho_{j}}{\rho_{j} + 1} \right)^{d_{L}(s, \lambda_{j})}.
\]
We denote the truncated function of $\phi(s, \lambda_{j}, \rho_{j})$
on $D_{j}$ by $\phi_{D_{j}}(s, \lambda_{j}, \rho_{j})$, i.e.,
$\phi_{D_{j}}(s, \lambda_{j}, \rho_{j}) = \phi(s, \lambda_{j},
\rho_{j})\\/\sum_{s \in D_{j}} \phi(s, \lambda_{j}, \rho_{j})$ if $s
\in D_{j}$ and $\phi_{D_{j}}(s, \lambda_{j}, \rho_{j}) = 0$ if $s
\notin D_{j}$.
We set $\lambda = (\lambda_{1}, \cdots, \lambda_{k}), \rho =
(\rho_{1}, \cdots, \rho_{k})$, $D = (D_{1}, \cdots, D_{k})$, and
$w = (w_{1}, \cdots, w_{k})$ for $w_{1}, \cdots, w_{k} \in (0, 1)$.
$\phi_{D}(s, \lambda, \rho, w)$ represents the mixture model of
$\phi_{D_{1}}(s, \lambda_{1}, \rho_{1}), \cdots, \phi_{D_{k}}(s,
\lambda_{k}, \rho_{k})$ with mixture coefficients $w_{1}, \cdots,
w_{k}$.
We define the selection pressure $p(s, E)$ exerted by the environment
$E$ on $s \in A^{\ast}$ as
\begin{eqnarray*}
p(s, E) = \left\{
\begin{array}{ll}
1 - \phi_{D}(s, \lambda, \rho, w) & \mbox{if } s \in
\displaystyle{\bigcup_{j = 1}^{k}} D_{j},\\
\mbox{arbitrary number } \geq 1 & \mbox{otherwise}.
\end{array}
\right.
\end{eqnarray*}
In this setting, $D_{1}, \cdots, D_{k}$ are reproductive sequence
domains under $E$ and the truncated function of $\phi_{D}(s, \lambda,
\rho, w)$ on $\hat{S}(t)$ (or $S(t)$) provides the relative fitness of
each sequence in $\hat{S}(t)$ (or $S(t)$).

We next formulate the mutation operator $m$.
$m$ cannot be written by an analytic expression.
We construct an algorithm generating a new mutated sequence based on a
given sequence that satisfies the conditions (i) to (iii) described in
the fifth paragraph of Section~\ref{sec:Formulation_of_the_model}.
The algebraic structure of $A^{\ast}$ as a monoid is required for
constructing $m$.
We introduce the empty letter $e$ and set $e \in A^{\ast}$.
We have $|e| = 0$.
We define the concatenation $s \cdot s^{\prime}$ as $s \cdot
s^{\prime} = x_{1} \cdots x_{|s|} x_{1}^{\prime} \cdots
x_{|s^{\prime}|}^{\prime}$ for $s = x_{1} \cdots x_{|s|}, s^{\prime} =
x_{1}^{\prime} \cdots x_{|s^{\prime}|}^{\prime} \in A^{\ast}$.
The concatenation $\cdot$ is an interior operation in $A^{\ast}$, and
$A^{\ast}$ forms a monoid with an identity $e$.
The metric topology of $d_{L}$ on $A^{\ast}$ is the discrete topology.
Therefore, $A^{\ast}$ forms a topological monoid.
DNA replication is performed by DNA polymerase's adding nucleotides to
one end of the newly forming DNA strand based on the order of
nucleotides in a template DNA sequence, and mutations are errors that
occurred in this process.
Therefore, we define the mutation operator $m$ by combining the
following algorithm and models of insertion, deletion, and
substitution of nucleotides provided by Definitions
\ref{def:Model_1_Insertion)} and
\ref{def:Model_2_Substitution_and_deletion}.
We set $\bar{A} = A \cup \{ e \}$.

\begin{algorithm}[htbp]
\caption{Generate a mutated sequence}
\label{alg:Generate_a_mutated_sequence}
\begin{algorithmic}[1]
\Require $s \in A^{\ast}, c \in \mathbb{Z}^{+}, \pi \in (0, 1)$

\State $\ell \gets |s|$; $s^{\prime} \gets e$

\For{$i = 1, \cdots, \ell$}

    \For{$j = 1, \cdots, c$}

        \State Choose $y_{1} \in \bar{A}$ according to Model $1$ \Comment{Whether a substring is inserted or not before each letter is determined.}

        \If{$y_{1} \neq e$}
             \State $s^{\prime} \gets s^{\prime} \cdot y_{1}$ \Comment{Insertion}
%
        \EndIf
    \EndFor

    \State Choose $y_{2} \in \bar{A}$ according to Model $2$ \Comment{Whether each letter is substituted or deleted, or no edit is performed is determined.}

    \If{$y_{2} \neq e$}
        \State $s^{\prime} \gets s^{\prime} \cdot y_{2}$ \Comment{Substitution}
    \Else
        \State $s^{\prime} \gets s^{\prime}$ \Comment{Deletion}

%
    \EndIf

\EndFor

\For{$k = 1, \cdots, c$}
\State Choose $y_{3} \in \bar{A}$ according to Model $1$
\If{$y_{3} \neq e$}
    \State $s^{\prime} \gets s^{\prime} \cdot y_{3}$ \Comment{Insertion}
\EndIf
\EndFor
\State \textbf{return} $s^{\prime}$
\end{algorithmic}
\end{algorithm}


\begin{definition}{Model 1 (Insertion)}
\label{def:Model_1_Insertion)}
Suppose that we are at the $i_{0}$th step of the $i$-loop of
Algorithm 1.
If (i) $i_{0} \geq 2$, (ii) the $(i_{0} - 1)$th letter of the input
string $s$ is equal to $x \in A$, (iii) the letter $x$ was deleted
(i.e., $y_{2} = e$) at the $(i_{0} - 1)$th step of the $i$-loop, and
(iv) no letters except $x$ was inserted (i.e., $y_{1} \notin A
\smallsetminus \{ x \}$) at all previous steps of the $j$-loop,
generate $y \in \bar{A}$ according to the probability function
\[
\psi_{I}(y; \pi) = \left\{
\begin{array}{ll}
1 - \pi & \mbox{if } y = e,\\
\pi/3 & \mbox{if } y = A \smallsetminus \{ x \}, \\
0 & \mbox{if } y = x
\end{array}
\right.
\]
on $\bar{A}$, otherwise generate $y \in \bar{A}$ according to
\[
\psi_{I}(y; \pi) = \left\{
\begin{array}{ll}
1 - \pi & \mbox{if } y = e, \label{eq:character_generation_1_2} \\
\pi/4 & \mbox{if } y \in A.
\end{array}
\right.
\]
\end{definition}
In the $k$-loop after the $i$-loop, the above (ii) to (iv) are changed
into the following (ii$^{\prime}$) to (iv$^{\prime}$), respectively.
(ii$^{\prime}$) The last letter of the input string $s$ is
equal to $x \in A$.
(iii$^{\prime}$) The letter $x$ was deleted (i.e., $y_{2} = e$) at the
last step of the $i$-loop. And (iv$^{\prime}$) no letters except
$x$ was inserted (i.e., $y_{3} \notin A \smallsetminus \{ x \}$) at
all previous steps of the $k$-loop.


\begin{definition}{Model 2 (Substitution and deletion)}
\label{def:Model_2_Substitution_and_deletion}
Suppose that we are at the $i_{0}$th step of the $i$-loop of
Algorithm~1.
If (i) the $i_{0}$th letter of the input string $s$ is equal to $x \in
A$ and (ii) the last letter inserted during the $j$-loop is equal to
$x$, generate $y \in \bar{A}$ according to the probability function
\[
\psi_{DS}(y; \pi) = \left\{
\begin{array}{ll}
1 - \pi & \mbox{if } y = x,\\
\pi/3 & \mbox{if } y \in A \smallsetminus \{ x \}, \\
0 & \mbox{if } y = e
\end{array}
\right.
\]
on $\bar{A}$, otherwise generate $y \in \bar{A}$ according to
\[
\psi_{DS}(y; \pi) = \left\{
\begin{array}{ll}
1 - \pi & \mbox{if } y = x,\\
\pi/4 & \mbox{if } y \in \bar{A} \smallsetminus \{ x \}.
\end{array}
\right.
\]
\end{definition}


In actual replication of DNA, a complementary sequence including
mutations is composed from a template sequence (in this case, a parent
sequence).
Therefore, to be exact, an input string $s$ of Algorithm 1 should be
interpreted as a complementary sequence including no mutations of a
parent sequence.
However, no problems occur in numerical experiments even if $s$ is
considered a parent sequence.
It is verified that Algorithm 1 as a mapping from $A^{\ast}$ to
$A^{\ast}$ satisfies conditions (i) to (iii) described in the fifth
paragraph of Section~\ref{sec:Formulation_of_the_model}.
Therefore, we define the mutation operator $m$ as the operator that
maps an element of $A^{\ast}$ to other element according to
Algorithm 1.




\section{Specification of the problem}
\label{sec:Specification_of_the_problem}

In this section, we introduce several definitions and specify problems
to be considered in the following sections.
We begin by defining the differentiation of a population of DNA
sequences.
Any subset of $A^{\ast}$ is open because the topology of $A^{\ast}$
with respect to $d_{L}$ is the discrete topology.
Thus, $D \subset A^{\ast}$ is a domain in the mathematical sense if it
cannot be represented as the union of two or more nonempty disjoint
subsets of $A^{\ast}$.

\begin{definition}[Reproductive sequence domain]
\label{def:reproductive_sequence_domain}
We say that a domain $D \subset A^{\ast}$ is a reproductive sequence
domain under an environment $E \in \mathcal{E}$ if $p_{E} - p(s; E) >
0$ holds for any $s \in D$.
\end{definition}
A reproductive sequence domain is not a geographical region but a
subset of $A^{\ast}$ of which DNA sequences that can produce offspring
sequences in a given environment are composed.

For asexual populations such as microbial populations, we say that
speciation occurred if DNAs of a new population are sufficiently far
from those of an original population (homologies between any pair
of DNAs or specific genes in the two populations are lower than a
certain threshold).
On the other hand, for sexual populations, we say that speciation
occurred if any pair of a male and female in a new and original
populations cannot produce offspring that have reproductive
capacities.
It is frequently difficult to test whether a biological population
differentiated in natural environments according to this definition
for the latter populations.
However, the reason why any pair of a male and female in the two
populations cannot produce offspring that have reproductive capacities
would be that DNA sets of the two populations are sufficiently
different.
Therefore, in this study, we formulate the differentiation of a
population as follows:

\begin{definition}[DNA population differentiation]
We suppose that there exist two reproductive sequence domains $D_{1}, D_{2}
\subset A^{\ast}$ under the environment $E \in \mathcal{E}$.
Let $S(t)$ be a population of DNA sequences at time $t$
from an ancestor sequence born in $D_{1}$.
We say that $S(t)$ differentiated at time $t_{0} \in [0, \infty)$ if
(i) an offspring sequence $s^{\prime} \in D_{2}$ was generated from
a sequence $s \in S(t_{0})$ at time $t_{0}$ by mutation and (ii)
$S(t) \cap D_{2} = \emptyset$ holds for any $t < t_{0}$.
\end{definition}

Lastly, we define the equilibrium state of a DNA population $S(t)$.
The extent to which two biological communities differ is called
$\beta$ dissimilarity or $\beta$ diversity and have studied in
ecology.
Here, the difference includes that of a population between two time
points and that between two different populations.
The $\beta$ dissimilarity $d_{\beta}(S_{1}, S_{2})$ between two
populations $S_{1}$ and $S_{2}$ of DNA sequences that two biological
communities have was introduced and applied in \cite{Koyano_2014}
after investigating a method for estimating it.
Using $d_{\beta}$, the evolutionary rate of a DNA population $S(t)$ at
time $t$ can be defined as
\[
\dot{S}(t) = \lim_{\Delta t \to 0} \frac{d_{\beta}(S(t + \Delta t),
S(t))}{\Delta t}.
\]
(Lots of studies have been conducted with respect to evolutionary rate
in evolutionary biology. See, for example, \cite{Thorne_2002}.)
Therefore, according to the traditional manner, the equilibrium state
of a population $S(t)$ can be defined as follows: $S(t)$ is in the
equilibrium state during $[t_{0}, t_{1}] \subset [0, \infty)$ if
$\dot{S}(t) = 0$ holds for any $t \in [t_{0}, t_{1}]$.
However, this definition of equilibrium state is not useful for
$S(t)$ because there always exists a possibility that offspring
sequences of sequences in $S(t)$ include mutations that occur
randomly.
Therefore, in this study, we consider a sort of equilibrium state
defined as follows:

\begin{definition}[near equilibrium state]
\label{def:near_equilibrium_state}
A population $S(t)$ is in the near equilibrium state during $[t_{0},
t_{1}] \subset [0, \infty)$ if $S(t)$ satisfies the following
conditions.
\begin{enumerate}
\item There exists $M \subset D$ satisfying $M \neq \emptyset$ such
  that if $t \in [t_{0}, t_{1}]$, then $S(t) \supset M$.

\item $o(s, t) = 0$ holds for any $t \in [t_{0}, t_{1}]$ and $s \in
  S(t) \cap M^{c}$.
\end{enumerate}
\end{definition}
In a state of near equilibrium, the population $S(t)$ always includes
sequences equal to $s$ for any $s \in M$, the relative frequency $q(s,
t)$ of them is updated at each time $t$ according to Equation
(\ref{eq:evolutionary_equation}), and, even if sequences were born
outside $M$ by mutation, they cannot produce offspring.
This is a minimum variation in $S(t)$ under stochastic mutation
and selection pressure from the environment.

In Section \ref{sec:Theoretical_analysis}, we first consider the
problem of under what conditions $S(t)$ differentiates or not in the
sympatric setting rather than in the allopatric setting, in other
words, without supposing that $S(t)$ is divided into two
subpopulations that live in different environments $E$ and
$E^{\prime}$ and therefore are under selection pressures based on
different selection pressure functions $p(s, E)$ and $p(s,
E^{\prime})$.
We subsequently consider whether $S(t)$ can reach and maintain a state
of near equilibrium under some condition.
Furthermore, we calculate the differentiation probability of $S(t)$
and the time expected to elapse before $S(t)$ differentiates.

\section{Theoretical analysis}
\label{sec:Theoretical_analysis}

In this section, we address the problems specified in the previous
section.
We first consider in what situations a population of DNA sequences
differentiates or not as time elapses.
We set $U(s, d) = \{ s^{\prime} \in A^{\ast}: d_{L} (s, s^{\prime})
\leq d \}$ for $s \in A^{\ast}$ and $d \in \mathbb{N}$.
We denote the almost sure convergence by $\asconv$.

\begin{proposition}[Condition for differentiation]
\label{prop:Condition_for_differentiation}
We suppose that there exist two reproductive sequence domains $D_{1},
D_{2} \subset A^{\ast}$ under an environment $E \in \mathcal{E}$.
Let $S(t)$ be a population of DNA sequences at time $t$ from an
ancestor sequence born in $D_{1}$.
If the following conditions (1) to (3) are satisfied, the probability
that $S(t)$ differentiates into $D_{2}$ approaches one as $t \to
\infty$.
Conversely, if the negation of the condition (1) holds, then $S(t)$
does not differentiate for any $t \in [0, \infty)$.
\begin{enumerate}
\item There exist $s_{1} \in D_{1}$ and $s_{2} \in D_{2}$ such that
  $d_{L}(s_{1}, s_{2}) \leq \ell(s_{1})$ holds.

\item $p(s, E) = 1/|D_{1}|$ for any $s \in D_{1}$.

\item $n(t) \geq 1$ for any $t \in [0, \infty)$.
\end{enumerate}
\end{proposition}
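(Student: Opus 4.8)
Write $\tau$ for the differentiation time, that is, the first $t$ at which some $s\in S(t)$ generates by mutation an offspring lying in $D_2$; by the definition of DNA population differentiation, $\{\tau<\infty\}$ is precisely the event that $S(t)$ ever differentiates, so I want to show $P(\tau<\infty)=1$ under (1)--(3), and $\tau\equiv\infty$ when (1) fails. The plan is the standard ``uniformly positive chance of success within a bounded number of steps, from every reachable state'' argument: I shall produce an integer $N$ and a constant $\delta>0$, depending only on $\pi$, $c$, $|D_1|$ and the population-size profile, such that on the event $\{\tau>t\}$ one has $P(\tau\le t+N\Delta t\mid\mathcal F_t)\ge\delta$, where $\mathcal F_t$ is the natural filtration of the chain $t\mapsto S(t)$. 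Iterating this bound gives $P(\tau>kN\Delta t)\le(1-\delta)^k\to0$, and since $S(t)$ has differentiated by time $t$ exactly when $\tau\le t$, the differentiation probability tends to $1$ as $t\to\infty$.

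First I would record a structural fact that both directions use: before $\tau$, every sequence of $S(t)$ that leaves any offspring lies in $D_1$. Indeed, $s$ leaves an offspring only if $o(s,t)>0$, which through rule C$_2$ and the definition of $f(\cdot,t,E)$ forces $p(s,E)<p_E$; hence $s$ lies in a reproductive sequence domain, and since $S(t)\cap D_2=\emptyset$ for $t<\tau$ while the population stems from an ancestor in $D_1$, we get $s\in D_1$. This settles the converse at once: if (1) fails then $d_L(s_1,s_2)>\ell(s_1)$ for all $s_1\in D_1$ and $s_2\in D_2$, i.e.\ $s_1\notin W(s_2)$, so no parent in $D_1$ can generate an offspring in $D_2$; hence $\tau=\infty$ and $S(t)$ never differentiates --- with no use of (2) or (3).

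For the forward direction I would concatenate two positive-probability moves. (a) Reaching a suitable parent. Fix witnesses $s_1\in D_1$, $s_2\in D_2$ of (1). Since $D_1$ is a domain it is connected in the Levenshtein-adjacency graph, and being finite it contains a path $a=u_0,u_1,\dots,u_k=s_1$ inside $D_1$ with $d_L(u_i,u_{i+1})=1$, $k\le|D_1|$. Condition (2) is exactly what makes this path usable: with $p(\cdot,E)\equiv 1/|D_1|$ on $D_1$, every distinct $D_1$-sequence scheduled to reproduce (a member of $\hat R(t)$) is allotted $o(\cdot,t)\ge1$ by rule C$_2$, and the mutation operator (Algorithm~\ref{alg:Generate_a_mutated_sequence}) gives each single Levenshtein-$1$ edit the probability ${}_{\ell(u)}C_{1}\,\pi(1-\pi)^{\ell(u)-1}/|V(u,1)|>0$, which is bounded below uniformly over the finite set $D_1$; using rule C$_1$ (reproduction of the oldest sequences) together with $n(t)\ge1$ from (3), any sequence once present enters $\hat R(\cdot)$ after finitely many steps, controlled by the population size. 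Composing along the path, from any configuration at a time $t<\tau$ there is probability at least some $\delta_1>0$ that $s_1\in S(t')$ for some $t'\le t+N_1\Delta t$, with $N_1$ depending only on $|D_1|$ and the population size. (b) Escaping into $D_2$. When such an $s_1$ is scheduled to reproduce, one of its offspring equals $s_2$ with probability ${}_{\ell(s_1)}C_{d}\,\pi^{d}(1-\pi)^{\ell(s_1)-d}/|V(s_1,d)|>0$, where $d=d_L(s_1,s_2)\le\ell(s_1)$ by (1); since distinct offspring are independent by the mutation operator, this occurs with probability at least some $\delta_2>0$, and on $\{\tau>t\}$ it is a genuine differentiation because part (ii) of the definition then holds automatically. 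Taking $N=N_1+1$ and $\delta=\delta_1\delta_2$ closes the iteration.

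The real work --- and the main obstacle --- is step (a) together with the structural fact, since both hinge on a careful reading of rules C$_1$, C$_2$ and the generating operator $G_t$: one must check that on $\{\tau>t\}$ the population genuinely stays inside $D_1$ (handling sequences that momentarily sit in non-reproductive regions, and the degenerate behaviour of the offspring-allocation rule C$_2$ when $\tilde n(t)$ is small), and one must extract the lower bounds $\delta_1,\delta_2$ and the waiting time $N_1$ uniformly in the configuration --- and, should one pass to the $\Delta t\to0$ regime of Proposition~\ref{prop:Evolutionary_equation_of_a_DNA_population}, uniformly in $\Delta t$, which is legitimate because the per-offspring mutation probabilities are free of $\Delta t$ while $\hat n(t)$ and $\tilde n(t)$ merely set the number of independent mutation trials per step.
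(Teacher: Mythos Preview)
Your treatment of the converse coincides with the paper's. For the forward direction the two arguments diverge: the paper does \emph{not} seek a uniform ``positive probability of success within $N$ steps'' bound. Instead it runs an infinite-trials argument. From $n(t)\ge1$, rule C$_1$, and $|D_1|<\infty$ it extracts (by pigeonhole) some $s\in D_1$ whose cumulative number of reproductions tends to infinity; each such reproduction is an independent Bernoulli trial for producing a prescribed neighbour $s'\in U(s,\ell(s))$, with the fixed positive probability ${}_{\ell(s)}C_{d}\,\pi^{d}(1-\pi)^{\ell(s)-d}/|V(s,d)|$, and the strong law of large numbers yields infinitely many successes almost surely. Iterating this step along the connectedness of $D_1$ reaches the witness $s_1$, and one further application produces $s_2\in D_2$.

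Your geometric scheme has a genuine gap precisely at the point you flag as ``the main obstacle'': condition~(3) gives only $n(t)\ge1$, with no upper bound. Under rule C$_1$ a sequence present at time $t_0$ must wait until it is among the $\lceil\gamma n(t)\rceil$ oldest before entering $\hat S(t)$, and this waiting time is of order $n(t_0)$; if $n(t)\to\infty$ no single $N_1$ works uniformly in $t$, and the iteration $P(\tau>kN\Delta t)\le(1-\delta)^k$ collapses. The paper's SLLN route is immune to this because it needs only infinitely many trials over all time, never a uniform rate. Your argument becomes correct under the additional hypothesis $\sup_t n(t)<\infty$ (which the paper imposes only later, in the corollary on the nonequilibrium state), and in that regime it buys something the paper does not: an explicit exponential tail bound on $\tau$. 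Without that hypothesis you would have to replace the geometric iteration by a conditional Borel--Cantelli argument on the successive reproduction events, which essentially reconstructs the paper's proof.
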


The former statement in the above proposition indicates that, wherever
in a reproductive sequence domain $D_{1}$ an ancestor sequence was
born, the population differentiates with high probability if there
exists another reproductive sequence domain near to $D_{1}$ (condition
(1)) and the selection pressure is close to the uniform distribution
(condition (2)).
The conditions of Proposition \ref{prop:Condition_for_differentiation}
do not include those of the split of one population into two
geographically isolated populations for reproductive isolation.
They are conditions for reproductive isolation to occur in a
population living in one habitat.
Therefore, if the differentiation described in the proposition is
speciation, it is not allopatric but sympatric speciation (see, for
example,
\cite{Johnson_1996,Kondrashov_1999,Dieckmann_1999,Kawata_2002} for
articles on theoretical studies on sympatric speciation and
\cite{Bolnick_2007} for a review).
Even if there exists another reproductive sequence domain nearby,
$S(t)$ does not necessarily differentiate, as demonstrated in
Corollary \ref{cor:near_equilibrium_state-1} below.

\begin{proof}
The set of sequences that can be produced by $s \in D_{1}$ is $U(s,
\ell(s))$, and thus, the latter part of the proposition is trivial.
Therefore, we demonstrate the former part.

\textbf{(Step 1)}
We denote the numbers of sequences in the population $\{
S(t^{\prime}): 0 \leq t^{\prime} \leq t \}$ that die by time $t$ and
that are equal to $s$ and die by $t$ by $\kappa(t)$ and $\kappa(s,
t)$, respectively, for $t \in [0, \infty)$ and $s \in D_{1}$.
Noting that the number of strings that are equal to or less than
$\ell$ in length is equal to $\sum_{\ell^{\prime} = 1}^{\ell}
4^{\ell^{\prime}} + 1$ for $\ell \in \mathbb{N}$, we have $|A^{\ast}|
< \infty$ from the definition of a string (see the second paragraph of
Section \ref{section:Introduction}).
Therefore,
\begin{equation}
|D_{1}| < \infty
\label{eq:number_of_elements_of_D_1}
\end{equation}
holds from $D_{1} \subset A^{\ast}$.
Using the conditions (3) of the proposition and C$_{1}$ described in
the third paragraph of Section \ref{sec:Formulation_of_the_model}, we
obtain
\begin{equation}
\kappa(t) \to \infty
\label{eq:sum_of_the_sizes_of_populations}
\end{equation}
as $t \to \infty$.
From Equations (\ref{eq:number_of_elements_of_D_1}) and
(\ref{eq:sum_of_the_sizes_of_populations}), there exists $s \in D_{1}$
such that $\kappa(s, t) \to \infty$ as $t \to \infty$.
Choosing such a $s \in D_{1}$, we have $o(s, t) \to \infty$ as $t \to
\infty$ from the conditions (2) and (3) and the definition of $o(s,
t)$ in the fifth paragraph of Section
\ref{sec:Formulation_of_the_model}.

\textbf{(Step 2)}
We arbitrarily choose $s^{\prime} \in U(s, \ell(s))$ for $s \in D_{1}$
chosen in Step 1.
We regard the production of an offspring sequence by a sequence in
$S(t)$ that is equal to $s$ as a trial that is a success if the
sequence produces an offspring sequence equal to $s^{\prime}$ by
mutation and is a failure otherwise.
Let $p_{i}$ represent the success probability of the $i$th trial,
counting from the first trial performed by a sequence in $\{ S(t): t
\geq 0 \}$ that are equal to $s$.
By the definition of the mutation operator $m$ in the sixth paragraph
of Section \ref{sec:Formulation_of_the_model}, $p_{i}$ is constant,
not depending on $i$ (hence, $p_{i}$ is written as $p$ hereafter), and
the trials are independent.
Therefore, this trial is a Bernoulli trial.
We have
\begin{equation}
p = {}_{\ell(s)}C_{d_{L}(s, s^{\prime})} \pi^{d_{L}(s,
s^{\prime})} (1 - \pi)^{\ell(s) - d_{L}(s, s^{\prime})} > 0
\label{eq:p_is_positive}
\end{equation}
from $d_{L}(s, s^{\prime}) \leq \ell(s)$ and $\pi \in (0, 1)$.
Let $X_{i}$ be a Bernoulli variable that takes one if the $i$th trial
is a success and zero otherwise.
Using the result obtained in Step 1, the strong law of large numbers,
and Equation (\ref{eq:p_is_positive}), we obtain $\sum_{i = 1}^{n}
X_{i} / n \asconv p$ and consequently $\sum_{i = 1}^{n} X_{i} \asconv
\infty$ as $n \to \infty$.
This means that sequences in $\{ S(t): t \geq 0 \}$ that are
equal to $s$ produce offspring sequences equal to $s^{\prime}$
infinite times with probability one as $t \to \infty$.
The above discussion is independent of the choice of $s^{\prime} \in
U(s, \ell(s))$, and therefore, we have $\kappa(s^{\prime}, t) \asconv
\infty$ as $t \to \infty$ for any $s^{\prime} \in U(s, \ell(s))$.
Thus, we have $o(s^{\prime}, t) \asconv \infty$ as $t \to \infty$ from
the conditions (2) and (3) and the definition of $o(s, t)$.
Repeating the same discussion as above provides $o(s^{\prime \prime},
t) \asconv \infty$ as $t \to \infty$ for any $s^{\prime \prime} \in
D_{1}$ because $D_{1}$ is a domain in the mathematical sense.

\textbf{(Step 3)}
From the result of 2, we have $o(s_{1}, t) \asconv \infty$ as $t \to
\infty$ for $s_{1} \in D_{1}$ satisfying the condition (1).
Therefore, conducting the same discussion as in Step 2, from the
condition (1), we observe that sequences in $\hat{S}(t)$ that are
equal to $s_{1}$ produce sequences in $D_{2}$ infinite times with
probability one as $t \to \infty$.
This completes the proof of the former part of the proposition.
\end{proof}

$\{ s \in S(t): p(s, E) \geq p_{E} \}$ is the set of sequences in a
DNA population $S(t)$ that were produced outside all reproductive
sequence domains by mutation and, therefore, cannot produce offspring.
If $\{ s \in S(t): p(s, E) < p_{E} \} \subsetneq M$ (proper subset)
holds for $M \subset A^{\ast}$, we write $S(t) \prec M$.
$\mathbb{Z}^{+}$ represents the set of positive integers.
As the following proposition describes, under certain conditions, it
is determined whether sympatric speciation occurs or a population
continues to move around randomly in a subset of $A^{\ast}$, depending
on the existence or nonexistence of other reproductive sequence domain
nearby.

\begin{corollary}
\label{cor:nonequilibrium_state}
We suppose that there exist two reproductive sequence domains $D_{1}, D_{2}
\subset A^{\ast}$ under an environment $E \in \mathcal{E}$.
Let $S(t)$ be a population of DNA sequences at time $t$ from an
ancestor sequence born in $D_{1}$.
If the following conditions (1) to (3) are satisfied, the probability
that $S(t)$ differentiates into $D_{2}$ approaches one as $t \to
\infty$.
On the other hand, if the conditions (2) and (3) and the negation of
the condition (1) are satisfied, then $S(t)$ continues to move around
randomly in $D_{1}$ as $t \to \infty$.
\begin{enumerate}
\item There exist $s_{1} \in D_{1}$ and $s_{2} \in D_{2}$ such that
  $d_{L}(s_{1}, s_{2}) \leq \ell(s_{1})$ holds.

\item $p(s, E) = 1/|D_{1}|$ for any $s \in D_{1}$.

\item There exist $t_{0} \in [0, \infty)$ and $n^{\ast} \in
  \mathbb{Z}^{+}$ such that $1 \leq n(t) \leq n^{\ast}$ for any $t
  \geq t_{0}$ and $n^{\ast} < |D_{1}|$ hold.
\end{enumerate}
\end{corollary}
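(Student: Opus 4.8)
The plan is to deduce both halves of the corollary from Proposition~\ref{prop:Condition_for_differentiation}. For the first assertion, note that conditions (1) and (2) here are precisely conditions (1) and (2) of Proposition~\ref{prop:Condition_for_differentiation}, and that condition (3) here forces $n(t)\geq 1$ for all $t\geq t_{0}$; since the conclusion concerns only the behaviour as $t\to\infty$, the argument of Proposition~\ref{prop:Condition_for_differentiation} only ever uses the lower bound on $n(t)$ for large $t$. The genuinely new feature is the upper bound $n(t)\leq n^{\ast}$: it makes $\tilde{n}(t)$ bounded, so $o(s,t)$ no longer tends to infinity as it did in Step~1 of that proof. First I would therefore revisit Step~1 and replace the statement ``$o(s,t)\to\infty$'' by the weaker statement that, for a suitable $s\in D_{1}$, sequences equal to $s$ reproduce infinitely often as $t\to\infty$; this should still follow from $\kappa(t)\to\infty$ and condition (2), but now has to be checked directly rather than read off from the divergence of $o(s,t)$. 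That weaker statement is all that Steps~2 and~3 need, because the Bernoulli-trial law-of-large-numbers argument there only requires that the number of trials diverge. With this substitution, Steps~2--3 carry over unchanged and give that the probability of differentiation into $D_{2}$ tends to one.

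For the second assertion, the negation of condition (1) means $d_{L}(s_{1},s_{2})>\ell(s_{1})$ for every $s_{1}\in D_{1}$ and $s_{2}\in D_{2}$, so the converse half of Proposition~\ref{prop:Condition_for_differentiation} already gives that $S(t)$ never differentiates, i.e.\ $S(t)\cap D_{2}=\emptyset$ for all $t$. To make precise the claim that $S(t)$ ``continues to move around randomly in $D_{1}$'' I would establish three things. (i) The reproductively capable part $L(t)\defeq\{s\in S(t):p(s,E)<p_{E}\}$ stays inside $D_{1}$: by induction on generations, the ancestor lies in $D_{1}$, and any offspring of a sequence $s\in D_{1}$ lies in $U(s,\ell(s))$, which by the negation of (1) is disjoint from $D_{2}$, so an offspring that can itself reproduce must lie in $D_{1}$. (ii) Hence $|L(t)|\leq n(t)\leq n^{\ast}<|D_{1}|$ for $t\geq t_{0}$, so $L(t)\subsetneq D_{1}$, that is $S(t)\prec D_{1}$ throughout. (iii) Running Steps~1--2 of the proof of Proposition~\ref{prop:Condition_for_differentiation} in this setting shows that, with probability one, every $s\in D_{1}$ is produced infinitely often as $t\to\infty$ (using that $D_{1}$ is a domain in the mathematical sense to spread the conclusion over all of $D_{1}$). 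Together, (i)--(iii) say that $S(t)$ keeps generating every sequence of $D_{1}$ yet never holds all of $D_{1}$ at once, which is exactly the asserted random wandering inside $D_{1}$.

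The step I expect to cause the most trouble is (i) of the second assertion: ruling out that an offspring of a $D_{1}$-sequence lands on a reproductively capable sequence lying outside $D_{1}\cup D_{2}$. Doing this cleanly seems to require the implicit hypothesis that $D_{1}$ and $D_{2}$ are the only reproductive sequence domains relevant under $E$, or an appeal to the concrete selection-pressure model of Subsection~\ref{subsec:Formulation_of_the_model_continuation}, under which $p(s,E)\geq 1=p_{E}$ off $\bigcup_{j}D_{j}$. The secondary point to watch is the audit of Step~1 of the proof of Proposition~\ref{prop:Condition_for_differentiation} flagged above: once $n(t)$ is bounded one must verify that losing ``$o(s,t)\to\infty$'' does not break the downstream Bernoulli argument; I believe it does not, since only divergence of the trial count is used, but this should be confirmed explicitly.
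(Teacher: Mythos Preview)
Your proposal is correct and follows essentially the same route as the paper. For the first assertion the paper simply says it is ``immediately obtained'' from the former part of Proposition~\ref{prop:Condition_for_differentiation}; your audit of Step~1 (replacing $o(s,t)\to\infty$ by divergence of the cumulative number of reproductions) is a legitimate refinement that the paper does not spell out, but the downstream Bernoulli argument is indeed unaffected, exactly as you surmised. For the second assertion the paper's proof is structurally identical to your (i)--(iii): it invokes the converse half of Proposition~\ref{prop:Condition_for_differentiation} to rule out differentiation, uses condition~(3) to get $S(t)\prec D_{1}$, and reruns Steps~1--2 to show every $s\in D_{1}$ is hit infinitely often. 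The implicit-hypothesis issue you flag in (i)---that offspring cannot land in a reproductive domain other than $D_{1}$ or $D_{2}$---is not addressed in the paper either; it is tacitly assumed, consistent with the selection-pressure model of Subsection~\ref{subsec:Formulation_of_the_model_continuation}.
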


\begin{proof}
The former part of the proposition is immediately obtained from the
former part of Proposition \ref{prop:Condition_for_differentiation}.
Therefore, we demonstrate the latter part.
Using the conditions (2) and (3), we have $\kappa(s, t) \asconv
\infty$ as $t \to \infty$ for any $s \in D_{1}$ by the same discussion
in Steps 1 and 2 in the proof of Proposition
\ref{prop:Condition_for_differentiation}.
However, from the negation of the condition (1) and Proposition
\ref{prop:Condition_for_differentiation}, $S(t)$ does not
differentiate into $D_{2}$ for any $t \in [0, \infty)$.
Consequently, $S(t) \prec D_{1}$ holds from the condition (3).
Therefore, we arbitrarily choose $s_{0} \in D_{1} \smallsetminus S(t)$.
Conducting the same discussion in Steps 1 and 2 in the proof of
Proposition \ref{prop:Condition_for_differentiation} from the
conditions (2) and (3) again, we obtain $\kappa(s_{0}, t) \asconv
\infty$ as $t \to \infty$.
Choosing a sequence from $D_{1} \smallsetminus S(t)$ arbitrarily for
each $t \in [0, \infty)$ and repeating the above discussion,
we see that $S(t)$ continues to move around randomly in $D_{1}$ as $t
\to \infty$.
\end{proof}

From the above proposition, we found that under the listed conditions
a population of DNA sequences maintains a state of nonequilibrium.
There exist species called living fossils whose phenotypes
have been nearly unchanged for a long time.
Therefore, we next consider whether it is possible that a DNA
population have undergone little change for a long period.
This problem is restated as whether a DNA population can maintain a
state of near equilibrium under some condition.
There exist examples (e.g., shark) in which phenotypes have been
nearly unchanged, but DNA sequences have changed more than expected.
Thus, depending on whether the above problem is positively solved or
negatively solved, it is determined whether there exist two possible
explanations or there exists only one explanation at the sequence
level for the phenomena that phenotypes have been nearly unchanged for
a long time.

To solve the above problem we introduce the following definition:
$\tilde{n}(t)$ is consistent with $p(s, E)$ if $p(s, E) =
p(s^{\prime}, E)$ implies $o(s, t) = o(s^{\prime}, t)$ for any $s,
s^{\prime} \in \hat{R}(t)$.
In addition, we introduce the following condition C${}_{3}$:
For any $t \in [0, \infty)$ and $s \in S(t)$,
\[
o(s, t) b(s, t) q(s, t) (1 - \pi)^{\ell(s)} > o(s, t) b(s, t) q(s, t)
\sum_{d = 1}^{\ell(s)} {}_{\ell(s)}C_{d} \pi^{d} (1 - \pi)^{\ell(s) -
d}
\]
holds.
C${}_{3}$ means that the mutation probability $\pi$ is sufficiently
small, such that the number of offspring sequences without mutations
that each sequence produces is less than that of offspring sequences
with mutations.
From here to the end of Corollary \ref{cor:near_equilibrium_state-1}
below, we consider the case of $\gamma = 1$.
Therefore, we have $R(t) = \hat{R}(t)$.
We set $\langle R(t) \rangle = \{ s \in R(t): o(s, t) \neq 0 \}$ for
any $t \in [0, \infty)$.
As the following proposition shows, if the size of a population that
an environment can accommodate decreases because of environmental
change and so on, and subsequently the population size remains
constant, a DNA population can maintain a state of near equilibrium in
the case where the mutation probability is sufficiently low.

\begin{proposition}[Conditions for near equilibrium state]
\label{prop:Yano's_problem}
Let $D \subset A^{\ast}$ be a reproductive sequence domain under an
environment $E \in \mathcal{E}$ and $S(t)$ be a population of DNA
sequences at time $t$.
We suppose that $t_{0}, t_{1}, t_{2} \in [0, \infty)$ are time points
satisfying $t_{0} < t_{1} < t_{2}$.
If the following conditions are satisfied, $S(t)$ maintains a state of
near equilibrium during $[t_{1}, t_{2}] \subset [0, \infty)$.
\begin{enumerate}
\item $n(t)$ monotonically decreases with $t \in [t_{0}, t_{1}]$.

\item There exists $n^{\ast} \in \mathbb{Z}^{+}$ such that $n(t) =
  n^{\ast}$ holds for any $t \in [t_{1}, t_{2}]$.

\item $n^{\ast}$ is consistent with $p(s, E)$ (this condition makes
  sense from the condition (2) above and the assumption of $\gamma =
  1$).

\item Condition C${}_{3}$ is satisfied.

\item For any $s \in \langle R(t_{1}) \rangle$, there does not exist
  $s^{\prime} \in (\bigcup_{t_{0} \leq t \leq t_{1}} R(t))^{c}$ that
  satisfies
\[
d_{L}(s, s^{\prime}) \leq \ell(s), \quad p(s^{\prime}, E) \leq \max \{
p(s^{\prime \prime}, E): s^{\prime \prime} \in \langle R(t_{1})
\rangle \}.
\]
\end{enumerate}
\end{proposition}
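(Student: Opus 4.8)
\textbf{Proof plan.} The plan is to exhibit the set $M$ required by Definition~\ref{def:near_equilibrium_state} and then verify its two defining properties on $[t_{1},t_{2}]$. I would take $M = \langle R(t_{1})\rangle$. This is nonempty: since $\gamma = 1$ and $n(t)\equiv n^{\ast}$ on $[t_{1},t_{2}]$ we have $\hat{n}(t_{1}) = n^{\ast}$ and $\tilde{n}(t_{1}) = n(t_{1}+\Delta t) - n(t_{1}) + \hat{n}(t_{1}) = n^{\ast}\ge 1$, so $\sum_{s}o(s,t_{1}) = n^{\ast}\ge 1$ forces $o(s,t_{1})\ge 1$ for some $s$. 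Moreover every $s\in M$ has $p(s,E) < p_{E}$ (otherwise $f(s,t_{1},E)=0$ and, being at the bottom of the rule~$\mathrm{C}_{2}$ ordering, $s$ would get $o(s,t_{1})=0$), so $M\subset D$. Throughout I use $\gamma = 1$, so $\hat{S}(t)=S(t)$, $\hat{R}(t)=R(t)$, and $\tilde{n}(t)=n^{\ast}$ on $[t_{1},t_{2}]$.

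For the first property, $S(t)\supset M$ on $[t_{1},t_{2}]$, I would argue directly from the evolutionary equation~(\ref{eq:evolutionary_equation}): its first two terms on the right are nonnegative, so $\partial q(s,t)/\partial t \ge -\,b(s,t)\,q(s,t)$ for every $s$, and integrating this inequality yields $q(s,t) \ge q(s,t_{1})\exp\!\big(-\!\int_{t_{1}}^{t} b(s,\tau)\,d\tau\big)$. Hence every sequence with $q(\cdot,t_{1})>0$, and in particular every $s\in M\subset S(t_{1})$, remains in $S(t)$ for all $t\in[t_{1},t_{2}]$.

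The core of the argument is the second property: $o(s,t)=0$ whenever $t\in[t_{1},t_{2}]$ and $s\in S(t)\cap M^{c}$. I would prove by induction over the $\Delta t$-steps of $[t_{1},t_{2}]$ that the reproducing set stays frozen, $\langle R(t)\rangle = M$; granting this, any $s\in S(t)\cap M^{c}$ lies in $R(t)\setminus\langle R(t)\rangle$, hence $o(s,t)=0$ by the definition of $\langle R(t)\rangle$. For the inductive step, note first that once $\langle R(t)\rangle = M$ the population at $t+\Delta t$ consists only of (possibly mutated) offspring of $M$, so the support is contained in the fixed set $\mathcal{R} := S(t_{1})\cup\bigcup_{s\in M} U(s,\ell(s))$, and in fact the support, hence the normalizing sum $\sum_{s'}(p_{E}-p(s',E))$, every $f(s,\cdot,E)$, and every $o(s,\cdot)$, is constant for $t\in(t_{1},t_{2}]$. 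Next I would show that each $s\in\mathcal{R}\setminus M$ with $p(s,E) < p_{E}$ satisfies $p(s,E) > p^{\ast}:=\max\{p(s',E):s'\in M\}$, in three cases: if $s\in S(t_{1})\setminus M$ then $o(s,t_{1})=0 < o(s',t_{1})$ for $s'\in M$, and since $o$ is nonincreasing along the rule~$\mathrm{C}_{2}$ ordering this forces $p(s,E)\ge p(s',E)$, with equality ruled out by consistency (condition~(3)); if $s$ is a mutant of $M$ that never occurred during $[t_{0},t_{1}]$, condition~(5) gives $p(s,E) > p^{\ast}$ directly; and if $s$ was present during $[t_{0},t_{1})$ but has since disappeared, the monotone decrease of $n$ (condition~(1)) together with consistency again gives $p(s,E) > p^{\ast}$. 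Thus in the rule~$\mathrm{C}_{2}$ ordering of $\mathcal{R}$ the members of $M$ occupy the top $|M|$ positions, and it remains to check that the $n^{\ast}$ offspring are then distributed entirely within $M$; this is where condition~$\mathrm{C}_{3}$ is used, to bound the floors $\lfloor n^{\ast}f(s,\cdot,E)\rfloor$ and the cutoff $u(t)$ so that no member of $M$ drops out of the reproducing set and no strictly less fit sequence of $\mathcal{R}\setminus M$ receives a unit of offspring. This closes the induction.

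I expect the allocation step to be the main obstacle: making the rule-$\mathrm{C}_{2}$ bookkeeping rigorous, namely showing that enlarging the support from $S(t_{1})$ to $\mathcal{R}$ --- which inflates the normalizing sum and deflates every $f(s,\cdot,E)$ --- neither expels a member of $M$ from the reproducing set nor awards a "$+1$" to any of the (strictly less fit) sequences of $\mathcal{R}\setminus M$. A secondary technical point is the treatment, in the third case above, of sequences that appeared during the shrinking phase $[t_{0},t_{1})$ but are absent at $t_{1}$.
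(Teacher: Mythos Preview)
Your overall strategy matches the paper's: take $M=\langle R(t_{1})\rangle$, show it is nonempty and contained in $R(t)$ for all $t\in[t_{1},t_{2}]$, and show $o(s,t)=0$ for every $s\notin M$ by partitioning such $s$ according to whether they appeared in $\bigcup_{t_{0}\le t\le t_{1}}R(t)$ or not. The paper's proof is far terser than yours and simply does not engage with the rule~$\mathrm{C}_{2}$ bookkeeping you flag as the main obstacle; it asserts directly from conditions~(2) and~(3) that $o(s,t)\neq 0$ for $s\in M$, and from conditions~(1),~(2),~(5) that $o(s,t)=0$ for $s\notin M$, without tracking floors or the cutoff $u(t)$. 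So your induction and three-case split are a more careful rendering of what the paper treats as immediate.

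Where you diverge from the paper is in the argument for $S(t)\supset M$. You go through the evolutionary PDE and a Gr\"onwall-type bound on $q(s,\cdot)$. This is a misstep: $q(s,t)$ is a rational number with denominator $n(t)$ attached to the discrete population, and the PDE in Proposition~\ref{prop:Evolutionary_equation_of_a_DNA_population} is a limiting equation derived under $y(s,t),\Delta t\to 0$; it does not literally govern the model in which $o(s,t)$, rule~$\mathrm{C}_{2}$, and the near-equilibrium definition live. The paper instead uses condition~(4) (i.e.\ $\mathrm{C}_{3}$) at this step: once $o(s,t)\neq 0$, $\mathrm{C}_{3}$ says that among the $o(s,t)$ offspring of $s$ the unmutated copies outnumber the mutated ones, so at least one offspring equals $s$ and hence $s\in R(t+\Delta t)$. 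That, together with your own inductive claim $\langle R(t)\rangle=M$, gives $M\subset R(t)$ directly. I would replace your PDE paragraph with this argument; the rest of your plan then lines up with the paper's proof, only with the allocation details you already identified spelled out rather than suppressed.
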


\begin{proof}
First, we have
\begin{equation}
\langle R(t_{1}) \rangle \neq \emptyset
\label{eq:angle_R_t_1_is_not_empty}
\end{equation}
from the condition (2).
Using the conditions (2) and (3) provides $o(s, t) \neq 0$ for any $s
\in \langle R(t_{1}) \rangle$ and $t \in [t_{1}, t_{2}]$.
Thus, $s \in R(t)$ holds from the condition (4), and therefore,
we have
\begin{equation}
R(t) \supset \langle R(t_{1}) \rangle.
\label{eq:R_t_includes_angle_R_t_1}
\end{equation}
From the condition (1), $\bigcup_{t_{0} \leq t \leq t_{1}} \langle
R(t) \rangle$ is a set obtained by adding sequences that can produce
offspring when the population size is greater than $n^{\ast}$ to those
that can produce offspring when it is equal to $n^{\ast}$.
Hence, using the condition (2), we obtain
\begin{equation}
o(s^{\prime \prime}, t) = 0
\label{eq:o_s_prime_prime_t_equal_0}
\end{equation}
for any $s^{\prime \prime} \in \bigcup_{t_{0} \leq t \leq t_{1}}
\langle R(t) \rangle \smallsetminus \langle R(t_{1}) \rangle$ and $t
\in [t_{1}, t_{2}]$.
Noting the condition (5), we observe that even if a sequence in $S(t)$
produces a sequence $s^{\prime} \in (\bigcup_{t_{0} \leq t \leq t_{1}}
\langle R(t) \rangle)^{c}$ by mutation,
\begin{equation}
o(s^{\prime}, t) = 0
\label{eq:o_s_prime_t_equal_0}
\end{equation}
holds.
From Equations (\ref{eq:o_s_prime_prime_t_equal_0}) and
(\ref{eq:o_s_prime_t_equal_0}) and $\langle R(t_{1}) \rangle \subset
\bigcup_{t_{0} \leq t \leq t_{1}} \langle R(t) \rangle$, we have
\begin{equation}
o(s, t) = 0
\label{eq:o_s_t_equal_0}
\end{equation}
for any $s \in \langle R(t_{1}) \rangle$ and $t \in [t_{1}, t_{2}]$.
Combining Equations (\ref{eq:angle_R_t_1_is_not_empty}),
(\ref{eq:R_t_includes_angle_R_t_1}), and (\ref{eq:o_s_t_equal_0})
completes the proof of the proposition.
\end{proof}

Using Proposition \ref{prop:Yano's_problem}, we see that it is
possible that a DNA population maintains a state of near equilibrium
and does not differentiate even if there exists other reproductive
sequence domain nearby.

\begin{corollary}
\label{cor:near_equilibrium_state-1}
Let $D_{1}, D_{2} \subset A^{\ast}$ be reproductive sequence domains and we
consider the situation in which there exists $s \in D_{1}$ that can
produce $s^{\prime} \in D_{2}$ by mutation.
$S(t)$ represents a population of DNA sequences at time $t$.
We suppose that the conditions (1) to (5) of Proposition
\ref{prop:Yano's_problem} with $t_{2} = \infty$ are satisfied.
If $S(t)$ does not differentiate into $D_{2}$ within $[0, t_{1}]$,
then $S(t)$ no longer differentiate into $D_{2}$.
\end{corollary}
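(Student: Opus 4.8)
The plan is to derive everything from Proposition~\ref{prop:Yano's_problem} together with Definition~\ref{def:near_equilibrium_state}. Since the conditions (1)--(5) of Proposition~\ref{prop:Yano's_problem} with $t_2 = \infty$ are assumed, that proposition gives that $S(t)$ maintains a state of near equilibrium during $[t_1, \infty)$. Unwinding Definition~\ref{def:near_equilibrium_state}, and inspecting the proof of Proposition~\ref{prop:Yano's_problem}, there is a nonempty set $M = \langle R(t_1) \rangle \subset D_1$ with $S(t) \supset M$ for every $t \ge t_1$ and $o(s,t) = 0$ for every $t \ge t_1$ and every $s \in S(t) \cap M^{c}$. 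In particular, for $t \ge t_1$ the only sequences of $S(t)$ that leave offspring belong to $M$, and $M$ is contained in $D_1$, hence disjoint from $D_2$.

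Next I would translate the hypothesis. By the definition of DNA population differentiation, $S(t)$ differentiates into $D_2$ exactly at the first time an offspring sequence lying in $D_2$ is generated by mutation; since the ancestor is born in $D_1$ we have $S(0) \cap D_2 = \emptyset$, so ``$S(t)$ does not differentiate into $D_2$ within $[0, t_1]$'' is equivalent to $S(t) \cap D_2 = \emptyset$ for every $t \in [0, t_1]$. Consequently $R(t) \cap D_2 = \emptyset$ for $t \in [t_0, t_1]$, i.e.\ $\bigl( \bigcup_{t_0 \le t \le t_1} R(t) \bigr) \cap D_2 = \emptyset$.

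The core of the argument is then a step-by-step induction over $t > t_1$ showing $S(t) \cap D_2 = \emptyset$, so that the differentiating event of the definition never occurs. Because $\gamma = 1$, every sequence of $S(t)$ for $t > t_1$ is a mutated offspring $s'$ of some $s \in S(t - \Delta t)$ with $o(s, t - \Delta t) \ge 1$; by the near-equilibrium property $s \in M$, and $d_L(s, s') \le \ell(s)$. If $s' \in \bigl( \bigcup_{t_0 \le t \le t_1} R(t) \bigr)^{c}$, condition (5) of Proposition~\ref{prop:Yano's_problem} forces $p(s', E) > \max\{ p(s'', E) : s'' \in \langle R(t_1) \rangle \}$, and then, repeating the reasoning behind Equation~(\ref{eq:o_s_prime_t_equal_0}) in the proof of Proposition~\ref{prop:Yano's_problem}, one gets $o(s', t) = 0$, so $s'$ produces no offspring and disappears at the next step; otherwise $s' \in \bigcup_{t_0 \le t \le t_1} R(t)$, which is disjoint from $D_2$ by the previous paragraph. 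Either way $s'$ cannot seed a surviving lineage in $D_2$, and the near-equilibrium set $M \subset D_1$ is never enlarged to meet $D_2$.

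I expect the delicate point to be the case $s' \in D_2$ with $p(s', E)$ above the fitness threshold of $M$: there one must argue carefully, via rule C$_{2}$ and the consistency hypothesis (condition (3)), that the $n^{\ast}$ offspring slots are already exhausted by the sequences of $M$, so that such an $s'$ indeed receives $o(s', t) = 0$, and then make precise the sense in which a $D_2$ sequence that is immediately selected out and leaves no descendants does not constitute differentiation of $S(t)$ into $D_2$. The remaining bookkeeping---that mutated offspring of $M$ landing in $\bigcup_{t_0 \le t \le t_1} R(t)$ stay in $D_1$, and the induction base at $t_1$---is routine given the two paragraphs above.
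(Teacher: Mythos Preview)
Your approach is the paper's approach: the paper's entire proof is the single sentence ``Trivial from Proposition~\ref{prop:Yano's_problem}.'' Everything you wrote is an unpacking of that one line, so on the level of strategy you are aligned with the source.

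That said, the ``delicate point'' you flag is genuine, and the paper does not address it either. Under the paper's own Definition of DNA population differentiation, a single mutated offspring landing in $D_2$ already constitutes differentiation, regardless of whether it subsequently reproduces. Near equilibrium (via Proposition~\ref{prop:Yano's_problem}) guarantees only that such an offspring $s' \in S(t) \cap M^{c}$ satisfies $o(s',t)=0$; it does \emph{not} prevent $s'$ from being generated in the first place. Condition~(5) likewise only forces $p(s',E)$ to exceed the threshold of $\langle R(t_1)\rangle$, not that $s'$ is unreachable by mutation from $M$. So your instinct to ``make precise the sense in which a $D_2$ sequence that is immediately selected out \ldots\ does not constitute differentiation'' is exactly right: as written, the corollary requires either (a) reading ``differentiate into $D_2$'' informally as ``establish a reproducing subpopulation in $D_2$'', which is what the near-equilibrium conclusion actually delivers, or (b) an additional hypothesis ruling out mutation paths from $\langle R(t_1)\rangle$ into $D_2$. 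The paper silently adopts~(a). Your write-up would be improved by stating this explicitly rather than leaving it as something one ``must argue carefully''; the rest of your argument (the induction, the use of $\bigcup_{t_0\le t\le t_1} R(t)\cap D_2=\emptyset$) is then unnecessary, since near equilibrium with $M\subset D_1$ already gives the informal conclusion directly.
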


\begin{proof}
Trivial from Proposition \ref{prop:Yano's_problem}.
\end{proof}



We next calculate the probability that a population $S(t)$ of DNA
sequences at time $t$ from an ancestor sequence born in a reproductive
sequence domain $D_{1} \subset A^{\ast}$ differentiates into another
reproductive sequence domain $D_{2} \subset A^{\ast}$.
Let $\hat{S}(t, D_{2})$ be a set of $s \in \hat{S}(t)$ for which there
exists $s^{\prime} \in D_{2}$ such that $d_{L}(s, s^{\prime}) \leq
\ell(s)$ holds.
We denote a set of $s^{\prime} \in D_{2}$ satisfying $d_{L}(s,
s^{\prime}) \leq \ell(s)$ by $D_{2}(s \in \hat{S}(t))$ for each
$s \in \hat{S}(t, D_{2})$.
We put $h(s) = \min \{ d_{L}(s, s^{\prime}): s^{\prime} \in D_{2} \}$
for each $s \in \hat{S}(t, D_{2})$.
We set $z(s, d) = | \{ s^{\prime} \in D_{2}(s \in \hat{S}(t)):
d_{L}(s, s^{\prime}) = d \} |$ for any $s \in \hat{S}(t, D_{2})$ and
$d \in \{ h(s), \cdots, \ell(s) \}$.

\begin{lemma}[Probability of differentiation]
\label{lemma:prob_speciation}
We suppose that there exist two reproductive sequence domains $D_{1},
D_{2} \subset A^{\ast}$ under an environment $E \in \mathcal{E}$.
Let $S(t)$ be a population of DNA sequences at time $t$ from an
ancestor sequence born in $D_{1}$.
If there exists the limit $b(s, t)$ of $y(s, t)$ letting $y(s, t),
\Delta t \to 0$ with the ratio $y(s, t)/\Delta t$ constant for any $s
\in A^{\ast}$ and $t \in [0, \infty)$,
the probability that $S(t)$ differentiates into $D_{2}$ at time $t$ is
provided by
\begin{equation}
\zeta(t) = \sum_{s \in \hat{S}(t, D_{2})} o(s, t) b(s, t) q(s, t)
\sum_{d = h(s)}^{\ell(s)} \frac{ z(s, d)}{|V(s, d)|} {}_{\ell(s)}
C_{d} \ \pi^{d} (1 - \pi)^{\ell(s) -
d}. \label{eq:speciation_probability}
\end{equation}
\end{lemma}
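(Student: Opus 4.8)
The plan is to run the argument of the proof of Proposition~\ref{prop:Evolutionary_equation_of_a_DNA_population} again, but now bookkeeping the offspring that land \emph{somewhere} in $D_{2}$ rather than the relative frequency of one prescribed string. First I would fix the reading of the statement: I work on the event that $S(t')\cap D_{2}=\emptyset$ for every $t'<t$ (otherwise $S$ has already differentiated before $t$), and on that event the definition of DNA population differentiation says that $S(t)$ differentiates at $t$ exactly when some sequence of $\hat S(t)$ produces, by mutation during $[t,t+\Delta t]$, an offspring lying in $D_{2}$ --- necessarily a string absent from $S$ up to then, so conditions (i) and (ii) of the definition hold simultaneously with $t_{0}=t$. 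Thus $\zeta(t)$ is to be read, just as $b(s,t)$ was in Proposition~\ref{prop:Evolutionary_equation_of_a_DNA_population}, as the $\Delta t\to0$ rate of that event, with $b(s,t)$ again standing in for $\lim y(s,t)/\Delta t$.

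The one genuinely new computation is the probability that a single offspring of a fixed $s\in\hat S(t)$ lands in $D_{2}$. From the proof of Proposition~\ref{prop:Evolutionary_equation_of_a_DNA_population}, an offspring of $s$ results from at most $\ell(s)$ independent edit operations, each occurring with probability $\pi$, so it lies at Levenshtein distance $d$ from $s$ with probability ${}_{\ell(s)}C_{d}\,\pi^{d}(1-\pi)^{\ell(s)-d}$ for $0\le d\le\ell(s)$, and by condition~(ii) on the mutation operator it is then uniformly distributed over $V(s,d)$. Hence, conditionally on the distance being $d$, it falls in $D_{2}$ with probability $z(s,d)/|V(s,d)|$; since $z(s,d)=0$ whenever $d<h(s)$, summing over $d$ from $h(s)$ to $\ell(s)$ gives the inner sum $\sum_{d=h(s)}^{\ell(s)}\frac{z(s,d)}{|V(s,d)|}\,{}_{\ell(s)}C_{d}\,\pi^{d}(1-\pi)^{\ell(s)-d}$ of (\ref{eq:speciation_probability}), an empty (hence null) sum unless $h(s)\le\ell(s)$, i.e.\ unless $s\in\hat S(t,D_{2})$.

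Then I would assemble the sources exactly as the third term of Equation~(\ref{eq:balance_equation}) was assembled: the sequences of $\hat S(t)$ equal to a given $s$ contribute, over $[t,t+\Delta t]$, a relative frequency $o(s,t)\,y(s,t)\,q(s,t)$ of offspring, of which the fraction just computed lands in $D_{2}$ --- the strong law of large numbers replacing the random count by its mean, while two or more simultaneous entries into $D_{2}$ produce only terms of higher order in $\Delta t$; summing over the distinct string values $s\in\hat S(t,D_{2})$, those with $o(s,t)=0$ dropping out, and then letting $y(s,t),\Delta t\to0$ with $y(s,t)/\Delta t$ held constant so that the hypothesis of the lemma lets us substitute $b(s,t)$ for $y(s,t)$ --- all verbatim from Proposition~\ref{prop:Evolutionary_equation_of_a_DNA_population} --- yields (\ref{eq:speciation_probability}). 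Equivalently, one may simply sum the evolutionary equation (\ref{eq:evolutionary_equation}) over $s\in D_{2}$, where $q(s,t)=0$ before differentiation, and reindex the resulting double sum by the source string, the number of strings of $D_{2}$ at distance $d$ from a source being exactly the quantity $z(\,\cdot\,,d)$. The step I expect to be the main obstacle is not the algebra but this last passage to the limit: pinning down the relative-frequency normalisation so that it agrees with the convention already built into Equation~(\ref{eq:balance_equation}), and checking that conditioning on $S(t')\cap D_{2}=\emptyset$ for $t'<t$ and discarding multiple simultaneous entries into $D_{2}$ are both harmless as $\Delta t\to0$; granting those, the remainder merely repeats the computations in the proof of Proposition~\ref{prop:Evolutionary_equation_of_a_DNA_population}.
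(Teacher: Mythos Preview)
Your proposal is correct and follows essentially the same route as the paper: identify that only $s\in\hat S(t,D_{2})$ can reach $D_{2}$, compute the per-offspring probability of landing in $D_{2}$ as $\sum_{d=h(s)}^{\ell(s)}\frac{z(s,d)}{|V(s,d)|}\,{}_{\ell(s)}C_{d}\,\pi^{d}(1-\pi)^{\ell(s)-d}$, and weight by $o(s,t)b(s,t)q(s,t)$. The paper's own proof is much terser---it simply lists the three observations you derive and asserts the formula---so your explicit treatment of the conditioning on no prior differentiation, the higher-order multiple-entry terms, and the limit passage to $b(s,t)$ actually supplies justification that the paper leaves implicit.
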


\begin{proof}
Noting that only sequences in $\hat{S}(t)$ that belong to $\hat{S}(t,
D_{2})$ can produce sequences in $D_{2}$ by mutation,
that we have $h(s) \leq d_{L}(s, s^{\prime}) \leq \ell(s)$ for any $s
\in \hat{S}(t, D_{2})$ and $s^{\prime} \in D_{2}(s \in \hat{S}(t))$,
and that the ratio of sequences that belong to $D_{2}$ of sequences
$s^{\prime} \in A^{\ast}$ satisfying $d_{L}(s, s^{\prime}) = d$ for $s
\in \hat{S}(t, D_{2})$ is $z(s, d) / |V(s, d)|$,
we observe that the probability that $S(t)$ differentiates into
$D_{2}$ at time $t$ is provided by Equation
(\ref{eq:speciation_probability}).
\end{proof}

Using Proposition~\ref{lemma:prob_speciation}, we derive a formula of
the expected time for estimating how much time passes until population
differentiation.
We first consider the case where $t$ is discrete and represents a
generation number.
The number of generations that elapse before population
differentiation is the number of generations that elapse before the
first success when the trial is repeated in which the probability that
each sequence of each generation produces offspring in a reproductive
sequence domain different from that where an ancestor sequence was
born is a success probability.
This trial is a Poisson trial because the success probability is
$\zeta(t)$ provided by Equation (\ref{eq:speciation_probability}) and
varies with each trial.
Therefore, the distribution of probability that the first success is
obtained by a sequence in the $t$th generation is the distribution of
waiting time when the Poisson trial with a sequence $\{ \zeta(t): t
\in \mathbb{N} \}$ of success probabilities is repeated, and thus, its
probability function is provided by $f(t, \{ \zeta(t^{\prime}):
t^{\prime} \in \mathbb{N} \}) = \zeta(t) \prod_{t^{\prime} = 0}^{t -
1} (1 - \zeta(t^{\prime}))$.
However, it is impossible to calculate the expected value of this
probability function, i.e., the sum of the series $\E(\tau) = \sum_{t
= 0}^{\infty} t \zeta(t) \prod_{t^{\prime} = 0}^{t - 1} (1 -
\zeta(t^{\prime}))$.
The expected value of the geometric distribution, which is the
distribution of waiting time until the first success when the
Bernoulli trial with a constant success probability $p$ is repeated,
is equal to $1/p$, i.e., the number $n$ of trials satisfying $n p =
1$.
Following this idea, we referred to the minimum $\tau \in \mathbb{N}$
satisfying $\sum_{t = 0}^{\tau} \zeta(t) \geq 1$ as the pseudo
expected number of trials repeated until the first success in the
above Poisson trial is obtained and denote it by $\E^{\ast}(t)$.
Note that there does not necessarily exist $t \in \mathbb{N}$
satisfying $\sum_{t = 0}^{\tau} \zeta(t) = 1$.
Substituting $\E^{\ast}(\tau)$ for $\E(\tau)$, we can obtain the
following result in the case where $t$ is continuous.

\begin{proposition}[Expected time before differentiation]
\label{prop:waiting_time_until_speciation_occurs}
We suppose that there exist two reproductive sequence domains $D_{1},
D_{2} \subset A^{\ast}$ under an environment $E \in \mathcal{E}$.
Let $S(t)$ be a population of DNA sequences at time $t$ from an
ancestor sequence born in $D_{1}$.
Then, the pseudo expected time $\E^{\ast}(\tau)$ until $S(t)$
differentiates into $D_{2}$ is provided by minimum $\tau \in [0,
\infty)$ satisfying $\int_{0}^{\tau} \zeta(t) \md t \geq 1$.
\end{proposition}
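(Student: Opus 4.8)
The plan is to obtain the continuous-time formula as the $\Delta t \to 0$ limit of the discrete pseudo expected number of generations defined in the paragraph preceding the proposition, in exactly the same spirit in which Equation~(\ref{eq:evolutionary_equation}) was obtained from Equation~(\ref{eq:balance_equation}) in Proposition~\ref{prop:Evolutionary_equation_of_a_DNA_population}. Thus the work is essentially to make the discrete-to-continuous passage precise rather than to prove a new estimate.

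First I would make the discretization explicit. One generation corresponds to a time step $[t, t + \Delta t]$, and the probability $\zeta(t)$ of Lemma~\ref{lemma:prob_speciation} carries the factor $b(s, t) = \lim y(s, t)/\Delta t$; in the pre-limit (discrete) process this factor is $y(s, t) = \hat{x}(s, t)/x(s, t) \in [0, 1]$, so, repeating the counting argument of Lemma~\ref{lemma:prob_speciation} before passing to the limit, the probability that $S$ differentiates into $D_{2}$ during the generation occupying $[t, t + \Delta t]$ equals $\zeta(t)\,\Delta t + o(\Delta t)$. Indexing generations by $n = 0, 1, 2, \ldots$ with the $n$th generation occupying $[n\Delta t, (n+1)\Delta t]$, the success probability of the $n$th trial in the Poisson trial described before the proposition is therefore $\zeta(n\Delta t)\,\Delta t + o(\Delta t)$.

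Next I would apply the definition of the pseudo expected number of trials: it is the minimum $N = N(\Delta t) \in \mathbb{N}$ with $\sum_{n = 0}^{N} \bigl( \zeta(n\Delta t)\,\Delta t + o(\Delta t) \bigr) \geq 1$, and the pseudo expected time is the elapsed time $\tau(\Delta t) = N(\Delta t)\,\Delta t$. Rewriting the defining inequality as the Riemann sum $\sum_{n = 0}^{N(\Delta t)} \zeta(n\Delta t)\,\Delta t \geq 1 - o(1)$ and letting $\Delta t \to 0$, nonnegativity and local integrability of $\zeta$ give $\sum_{n = 0}^{N(\Delta t)} \zeta(n\Delta t)\,\Delta t \to \int_{0}^{\tau} \zeta(t)\,\md t$ with $\tau = \lim_{\Delta t \to 0} \tau(\Delta t)$, while minimality of $N(\Delta t)$ is preserved in the limit; hence $\tau$ is the minimum value in $[0, \infty)$ satisfying $\int_{0}^{\tau} \zeta(t)\,\md t \geq 1$, which is precisely $\E^{\ast}(\tau)$ in the continuous case. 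I would also record the consistency check that, when $\zeta(t) \equiv \lambda$ is constant, $\int_{0}^{\tau} \lambda\,\md t = \lambda\tau$ so that $\E^{\ast}(\tau) = 1/\lambda$, matching the mean of the exponential waiting time that is the continuous limit of the geometric distribution --- the very analogy used to motivate the definition of $\E^{\ast}$.

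The main obstacle is bookkeeping rather than anything deep: justifying carefully that the per-generation success probability is $\zeta(t)\,\Delta t + o(\Delta t)$ by tracing the substitution of $b(s, t)$ for $y(s, t)$ back through Lemma~\ref{lemma:prob_speciation} and the derivation of Equation~(\ref{eq:balance_equation}); confirming that the accumulated $o(\Delta t)$ error over the $O(1/\Delta t)$ generations is negligible, so that the limiting inequality is exactly $\int_{0}^{\tau} \zeta(t)\,\md t \geq 1$; and noting that the minimum $\tau$ is well defined in the limit, with the degenerate case $\int_{0}^{\infty} \zeta(t)\,\md t < 1$ handled by setting $\E^{\ast}(\tau) = \infty$. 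No genuinely new estimate beyond those already used in Proposition~\ref{prop:Evolutionary_equation_of_a_DNA_population} is required.
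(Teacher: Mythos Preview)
Your proposal is correct, but it does substantially more work than the paper. The paper's proof reads in full: ``Obvious from Lemma~\ref{lemma:prob_speciation} and the definition of $\E^{\ast}(t)$.'' In other words, the paper treats the continuous formula as essentially definitional: the paragraph before the proposition defines $\E^{\ast}(\tau)$ in the discrete setting as the minimum $\tau\in\mathbb{N}$ with $\sum_{t=0}^{\tau}\zeta(t)\geq 1$, then simply declares that ``we can obtain the following result in the case where $t$ is continuous,'' implicitly replacing the sum by an integral. The proposition then just records this replacement, invoking Lemma~\ref{lemma:prob_speciation} only to supply the integrand $\zeta(t)$.

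Your route is genuinely different in that you do not take the integral formula as given by analogy but instead \emph{derive} it as the $\Delta t\to 0$ limit of the discrete pseudo expectation, tracing the factor $b(s,t)=\lim y(s,t)/\Delta t$ back through the proof of Lemma~\ref{lemma:prob_speciation} to identify the per-generation success probability as $\zeta(t)\,\Delta t + o(\Delta t)$ and then passing Riemann sums to the limit. What this buys you is an actual justification that the continuous $\E^{\ast}(\tau)$ is the correct limit of the discrete one, together with the exponential-waiting-time consistency check; what the paper's approach buys is brevity, since at the level of rigor adopted elsewhere in the paper the sum-to-integral passage is taken as self-evident. Your argument is sound and would be the right thing to write if one wanted to make the paper's ``obvious'' precise.
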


\begin{proof}
Obvious from Lemma \ref{lemma:prob_speciation} and the definition
of $\E^{\ast}(t)$.
\end{proof}

Equation (\ref{eq:speciation_probability}),
consequently, $\E^{\ast}(\tau)$ includes the $d$th power for the
distance $d$ from $S(t)$ to $D_{2}$.
Therefore, the time elapses before the differentiation occurs is
expected to become longer rapidly as the distance to other
reproductive sequence domain increases.
To compute $\E^{\ast}(\tau)$ actually, it is important to estimate the
selection pressure $p(s, t, E)$, mutation probability $\pi$,
population size $n(t)$, and location of a different reproductive
sequence domain $D_{2}$.
In some cases $\pi$ and $n(t)$ can be estimated and predicted, but
it may be difficult for us human beings to estimate $D_{2}$.
A method for estimating $p(s, t, E)$ was investigated in
\cite{Koyano_submitted}.

\section{Conclusion}
\label{sec:Conclusion}

In this study, we constructed the evolutionary model of a population
of DNA sequences on the noncommutative topological monoid $A^{\ast}$
formed by strings on the alphabet $A$ composed of four letters
$\mathtt{a}, \mathtt{c}, \mathtt{g}$, and $\mathtt{t}$ and analyzed
the constructed model in a theoretical manner.
No molecular-evolutionary models have been constructed on $A^{\ast}$,
however $A^{\ast}$ is a natural stage for molecular-evolutionary
modeling because DNA sequences are represented as elements of
$A^{\ast}$.
It is a task in the future to tackle problems in molecular evolution
that are difficult to address theoretically, applying the model
developed here in a numerical manner.


\begin{thebibliography}{10}

\bibitem{Bolnick_2007}
D.~I. Bolnick and B.~M. Fitzpatrick.
\newblock Sympatric speciation: models and empirical evidence.
\newblock {\em Annual Review of Ecology, Evolution, and Systematics},
  38:459--487, 2007.

\bibitem{Coyne_2004}
J.~A. Coyne and H.~A. Orr.
\newblock {\em Speciation}.
\newblock Sinauer, Sunderland, MA, 2004.

\bibitem{Dieckmann_1999}
U.~Dieckmann and M.~Doebeli.
\newblock On the origin of species by sympatric speciation.
\newblock {\em Nature}, 400(6742):354--357, 1999.

\bibitem{Dieckmann_2004}
U.~Dieckmann, M.~Doebeli, J.~A.~J. Metz, and D.~Tautz, editors.
\newblock {\em Adaptive Speciation}.
\newblock Cambridge University Press, Cambridge, UK, 2012.

\bibitem{Doebeli_2000}
M.~Doebeli and U.~Dieckmann.
\newblock {Evolutionary branching and sympatric speciation caused by different
  types of ecological interactions}.
\newblock {\em American Naturalist}, 156(Supplement):S77--S101, 2000.

\bibitem{Frossel_2000}
B.~Drossel and A.~McKane.
\newblock {Competitive speciation in quantitative genetic models}.
\newblock {\em Journal of Theoretical Biology}, 204(3):467--478, 2000.

\bibitem{Geritz_1998}
S.~A.~H. Geritz, {\'E}.~Kisdi, G.~Mesze, and J.~A.~J. Metz.
\newblock {Evolutionarily singular strategies and the adaptive growth and
  branching of the evolutionary tree}.
\newblock {\em Evolutionary Ecology}, 12:35--57, 1998.

\bibitem{Geritz_1997}
S.~A.~H. Geritz, J.~A.~J. Metz, {\'E}.~Kisdi, and G.~Mesz{\'e}na.
\newblock {Dynamics of adaptation and evolutionary branching}.
\newblock {\em Physical Review Letters}, 78(10):2024--2027, 1997.

\bibitem{Higgs_1991}
P.~G. Higgs and B.~Derrida.
\newblock {Stochastic models for species formation in evolving populations}.
\newblock {\em Journal of Physics A}, 24:L985--L991, 1991.

\bibitem{Johnson_1996}
P.~A Johnson, F.~C. Hoppensteadt, J.~J. Smith, and G.~L. Bush.
\newblock Conditions for sympatric speciation: a diploid model incorporating
  habitat fidelity and non-habitat assortative mating.
\newblock {\em Evolutionary Ecology}, 10(2):187--205, 1996.

\bibitem{Kawata_2002}
M.~Kawata.
\newblock Invasion of vacant niches and subsequent sympatric speciation.
\newblock {\em Proceedings of the Royal Society of London B}, 269(1486):55--63,
  2002.

\bibitem{Kondrashov_1999}
A.~S. Kondrashov and F.~A. Kondrashov.
\newblock {Interactions among quantitative traits in the course of sympatric
  speciation}.
\newblock {\em Nature}, 400(6742):351--354, 1999.

\bibitem{Koyano_submitted}
H.~Koyano, M.~Hayashida, and T.~Akutsu.
\newblock {Optimal string clustering based on a Laplace-like mixture and EM
  algorithm on a set of strings}.
\newblock arXiv:1411.6471[math.ST].

\bibitem{Koyano_2016}
H.~Koyano, M.~Hayashida, and T.~Akutsu.
\newblock {Maximum margin classifier working in a set of strings}.
\newblock {\em Proceedings of the Royal Society A}, 2016.

\bibitem{Koyano_2010}
H.~Koyano and H.~Kishino.
\newblock {Quantifying biodiversity and asymptotics for a sequence of random
  strings}.
\newblock {\em Physical Review E}, 81(6):061912(1)--061912(8), 2010.

\bibitem{Koyano_2014}
H.~Koyano, T.~Tsubouchi, H.~Kishino, and T.~Akutsu.
\newblock {Archaeal $\beta$ diversity patterns under the seafloor along
  geochemical gradients}.
\newblock {\em Journal of Geophysical Research G: Biogeosciences},
  119(9):1770--1788, 2014.

\bibitem{Manzo_1994}
F.~Manzo and L.~Peliti.
\newblock {Geographic speciation in the Derrida--Higgs model of species
  formation}.
\newblock {\em Journal of Physics A}, 27(21):7079, 1994.

\bibitem{Nosil_2012}
P.~Nosil.
\newblock {\em Ecological Speciation}.
\newblock Oxford University Press, Oxford, UK, 2012.

\bibitem{Templeton_1980}
A.~R. Templeton.
\newblock {The theory of speciation via the founder principle}.
\newblock {\em Genetics}, 94(4):1011--1038, 1980.

\bibitem{Thorne_2002}
J.~L. Thorne and H.~Kishino.
\newblock Divergence time and evolutionary rate estimation with multilocus
  data.
\newblock {\em Systematic Biology}, 51(5):689--702, 2002.

\end{thebibliography}

\newcommand{\noop}[1]{}

\end{document}